\title{XPL: An extended probabilistic logic \\ for probabilistic
  transition systems\footnote{This work was partially supported by NSF
    grant IIS-1447549.}}
\author{ANDREY GORLIN \\ C. R. RAMAKRISHNAN \\ Department of Computer
  Science, \\ Stony Brook University, Stony Brook, NY 11794,
  U.S.A. \\ $\{$\texttt{agorlin,cram$\}$@cs.stonybrook.edu}}
\theoremstyle{plain}
\newtheorem{theorem}{Theorem}
\newtheorem{lemma}[theorem]{Lemma}
\newtheorem{corollary}[theorem]{Corollary}
\theoremstyle{definition}
\newtheorem{definition}[theorem]{Definition}
\newtheorem{example}[theorem]{Example}
\theoremstyle{remark}
\newtheorem*{remark}{Remark}
\newcommand{\diam}[1]{\langle #1 \rangle}
\newcommand{\fcn}[2][\rightarrow] {\mathrel{:} #2 #1}
\newcommand{\tarrow}[1] {\stackrel{#1} \rightarrow}
\newcommand{\id}[1]{\mbox{\it #1\/}}
\newcommand{\kw}[1]{\mbox{\sf #1}}
\newcommand{\fpe}{\id{FPE}}
\newcommand{\probabs}{\id{PPA}}
\newcommand{\groupmodalities}{\id{GRP}}
\newcommand{\depgr}{\kw{Dg}}
\newcommand{\bbbn}{\ensuremath{\mathbb{N}}}
\begin{document}

\maketitle

\begin{abstract}
Generalized Probabilistic Logic (GPL) is a temporal logic, based on
the modal mu-calculus, for specifying properties of reactive
probabilistic systems.  We explore XPL, an extension to GPL allowing
the semantics of nondeterminism present in Markov decision processes
(MDPs).  XPL is expressive enough that a number of independently
studied problems--- such as termination of Recursive MDPs (RMDPs),
PCTL* model checking of MDPs, and reachability for Branching MDPs---
can all be cast as model checking over XPL.  Termination of multi-exit
RMDPs is undecidable; thus, model checking in XPL is undecidable in
general.  We define a subclass, called separable XPL, for which model
checking is decidable.  Decidable problems such as termination of
1-exit RMDPs, PCTL* model checking of MDPs, and reachability for
Branching MDPs can be reduced to model checking separable XPL.  Thus,
XPL forms a uniform framework for studying problems involving systems
with non-deterministic and probabilistic behaviors, while separable
XPL provides a way to solve decidable fragments of these problems.
\end{abstract}

\section{Introduction}
For finite-state systems, model checking a temporal property can be
cast in terms of model checking in the modal $\mu$-calculus, the
so-called ``assembly language'' of temporal logics.   A number of
temporal logics have been proposed and used for specifying properties
of finite-state \emph{probabilistic} systems.  Two of the notable
logics for probabilistic systems based on the $\mu$-calculus are
GPL~\cite{gpl} and pL$\mu$~\cite{indp}.  

GPL is defined over Reactive Probabilistic Labeled Transition Systems
(RPLTSs).  In an RPLTS, each state has a set of outgoing transitions
with distinct labels; each transition, in turn, specifies a
(probabilistic) distribution of target states.  The branching-time
probabilistic logic GPL is expressive enough to serve as an ``assembly
language'' of a large number of probabilistic temporal logics.  For
instance, model checking PCTL* properties over Markov Chains, as well
as termination and reachability of \emph{Recursive} Markov Chains
(RMCs) can be cast in terms of GPL model checking~\cite{gpl,pip}.

In this paper, we propose an extension to GPL, which we call
\emph{Extended Probabilistic Logic} (XPL), to express properties of
probabilistic systems with \emph{internal nondeterministic choice},
under linear-time semantics.  Syntactically, XPL is very close to GPL:
whereas GPL has probabilistic quantifiers $\mathsf{Pr}_{>p} \psi$ and
$\mathsf{Pr}_{\geq p} \psi$ over fuzzy formulae $\psi$, XPL admits
quantifiers $\mathsf{Pr}_{<p} \psi$ and $\mathsf{Pr}_{\leq p} \psi$ as
well.  XPL's semantics, however, is given with respect to maximizing
schedulers that resolve internal non-deterministic choices. Properties
involving minimizing schedulers can be analyzed by considering their
duals (with respect to negation) over maximizing schedulers.  The
semantics of XPL is defined over Probabilistic Labeled Transition
Systems (PLTSs).  In a PLTS, each state has a set of outgoing
transitions, \emph{possibly with common labels}; and each transition
specifies a distribution of target states.  PLTSs, as interpreted with
XPL, thus exhibit probabilistic choice and, under both linear- and
branching-time semantics, nondeterministic choice.

\medskip
\noindent
\textbf{Contributions and Significance:}\quad XPL is expressive enough
that a wide variety of independently-studied verification problems can
be cast as model checking PLTSs with XPL.  In fact, undecidable
problems such as termination of multi-exit \emph{Recursive} Markov
Decision Processes (Recursive MDPs or RMDPs) can be reduced in linear
time to model checking PLTSs with XPL.  We introduce a
syntactically-defined subclass, called \emph{separable XPL}, for which
model checking is decidable.  We describe a procedure for model
checking XPL which always terminates--- successfully with the model
checking result, or with failure--- such that it always terminates
successfully for separable XPL (see Sect.~\ref{sec:modchk}).

A number of distinct model checking algorithms have been developed
independently for decidable verification problems involving systems
that have probabilistic and internal non-deterministic choice.
Examples of such problems include PCTL* model checking of
MDPs~\cite{PCTL*}, reachability in branching MDPs~\cite{bmdp}, and
termination of 1-exit RMDPs~\cite{rmdp}.  These problems can all be
reduced, in linear time, to model checking \emph{separable} XPL
formulae over PLTSs (see Sect.~\ref{sec:encex}).  To the best of our
knowledge, the idea that branching and recursive systems could be
interpreted as having nondeterminism under the branching-time
semantics, and the question of its compatibility with nondeterminism
under the linear-time semantics, have not been recognized in the
literature.

Termination of multi-exit RMDPs, cast as a model checking problem over
XPL along the same lines as our treatment of 1-exit RMDPs, yields an
XPL formula that is not separable.  Thus separability can be seen as a
characteristic of the verification problems that are known to be
decidable, when cast in terms of model checking in XPL.  Consequently,
XPL in general, and separable XPL in particular, form a useful
formalism to study the relationships between verification problems
over systems involving probabilistic and both linear- and
branching-time non-deterministic choice.  We discuss these issues in
greater detail in Sect.~\ref{sec:concl}.

\section{Preliminaries}\label{sec:gpl}
In this section, we formally define PLTSs, which are used to define
the semantics of XPL.  We also summarize the syntax and semantics of
GPL, using the notations from~\cite{gpl}.

\subsection{Probabilistic Labeled Transition Systems}\label{sec:rplts}
We define a probabilistic labeled transition system
(PLTS) as an extension of \cite{gpl}'s~RPLTS.
\begin{definition}[PLTS]\label{def:nlts}
  With respect to fixed sets $Act$ and $Prop$ of \emph{actions} and
  \emph{propositions}, respectively, a PLTS $L$ is a quadruple $(S,
  \delta, P, I)$, where
  \begin{itemize}
  \item $S$ is a countable set of states;

  \item $\delta \subseteq S \times Act \times S$ is the transition
    relation;

  \item $P \fcn{\delta \times \bbbn} [0,1]$ is the transition
    probability distribution satisfying:
    \begin{itemize}
    \item $\forall s \in S . \forall a \in Act . \forall c \in \bbbn.
      \sum\limits_{s':(s,a,s') \in \delta} P(s,a,s',c) \in \{0,1\}$, and

    \item $\forall s \in S . \forall a \in Act . \forall s' \in S
      . (s,a,s') \in \delta \implies (\exists c \in \bbbn
      . P(s,a,s',c) > 0)$;
    \end{itemize}
  \item $I \fcn{S} 2^{Prop}$ is the \emph{interpretation}, recording
    the set of propositions true at a state.
  \end{itemize}
\end{definition}
A reactive PLTS does not have internal nondeterminism, i.e., its
transition probability distribution $P$ is a function of $\delta$.
This definition is in line with the most general for a
PLTS~\cite{indp, proba}, in which, given an action, a probabilistic
distribution is chosen nondeterministically (we assume that there are
finitely many nondeterministic choices).  Other equally expressive
models include alternating automata, in which labeled nondeterministic
ones are followed by silent probabilistic choices.  The difference
between such models has been analyzed with respect to
bisimulation~\cite{bisim}.

Given $L = (S, \delta, P, I)$, a \emph{partial computation} is a
sequence $\sigma = s_0 \tarrow{a_1} s_1 \tarrow{a_2} \cdots
\tarrow{a_n} s_n$, where for all $0 \leq i < n$, $(s_i, a_{i+1},
s_{i+1}) \in \delta$.  Also, $\mathsf{fst}(\sigma) = s_0$ and
$\mathsf{last}(\sigma) = s_n$ denote, respectively, the first and last
states in $\sigma$.  Each transition of a partial computation is
labeled with an action $a_i \in Act$.  The set of all partial
computations of $L$ is denoted by $\mathcal{C}_L$, and
$\mathcal{C}_L(s) = \{\sigma \in \mathcal{C}_L \mid
\mathsf{fst}(\sigma) = s\}$.  \emph{Composition} of partial
computations, $\sigma \tarrow{a} \sigma'$, represents $s_0
\tarrow{a_1} \cdots \tarrow{a_n} s_n \tarrow{a} s'_0 \tarrow{b_1}
\cdots \tarrow{b_m} s'_m$ if $(s_n, a, s'_0) \in \delta$.
A partial computation $\sigma'$ is a \emph{prefix} of $\sigma$ if
$\sigma' = s_0 \tarrow{a_1} \cdots \tarrow{a_i} s_i$ for some $i \leq
n$.

From a set of partial computations, we can build deterministic trees
\emph{(d-trees)}.  We often denote a d-tree by the set of paths in the
tree.  Every d-tree is prefix-closed and deterministic.  $T \subseteq
\mathcal{C}_L$ is \emph{prefix-closed} if, for every $\sigma \in T$
and $\sigma'$ a prefix of $\sigma$, $\sigma' \in T$.  $T$ is
\emph{deterministic} if for every $\sigma, \sigma' \in T$ with $\sigma
= s_o \tarrow{a_1} \cdots \tarrow{a_n} s_n \tarrow{a} s \cdots$ and
$\sigma' = s_0 \tarrow{a_1} \cdots \tarrow{a_n} s_n \tarrow{a'} s'
\cdots$, either $a \neq a'$ or $s = s'$, i.e., if a pair of
computations share a prefix, the first difference cannot involve
transitions labeled by the same action.  A d-tree $T$ has a starting
state, denoted $\mathsf{root}(T)$; if $s = \mathsf{root}(T)$ then $T
\subseteq \mathcal{C}_L(s)$.  We also let $\mathsf{edges}(T) =
\{(\sigma, a, \sigma') \mid \sigma, \sigma' \in T \land \exists s \in
S . \sigma' = \sigma \tarrow{a} s\}$.

$\mathcal{T}_L$ refers to all the d-trees of $L$, and
$\mathcal{T}_L(s) = \{T \in \mathcal{T}_L \mid \mathsf{root}(T) =
s\}$.  $T'$ is a \emph{prefix} of $T$ if $T' \subseteq T$.  $T
\tarrow{a} T'$ means $T' = \{\sigma \mid \mathsf{root}(T) \tarrow{a}
\sigma \in T\}$.  $T$ is \emph{finite} if $|T| < \infty$, and
\emph{maximal} if there exists no d-tree $T'$ with $T \subset T'$.
$\mathcal{M}_L$ and $\mathcal{M}_L(s)$ are analogous to
$\mathcal{T}_L$ and $\mathcal{T}_L(s)$, but for maximal d-trees.  An
\emph{outcome} is a maximal d-tree.

\begin{figure}
  \centering
  \subfloat[An example PLTS]{\label{fig:plts}
  \begin{adjustbox}{scale=1}
    \begin{tikzpicture}[shorten >=1pt,%
      inner sep = 0.5mm, %
      node distance=1.5cm,%
      every state/.style={circle, inner sep=3pt, minimum size=2mm}, %
      on grid,auto,%
      initial text=,%
      highlight/.style={very thick, draw=red, text=red},%
      bend angle=30]
      \node[state] (s_1) {$s_1$};
      \node[state] (s_2) [below=1.25cm of s_1] {$s_2$};
      \node[state] (s_3) [left=1.75cm of s_2] {$s_3$};
      \node[state] (s_4) [right=1.75cm of s_2] {$s_4$};
      \node[state] (s_5) [below=of s_3] {$s_5$};
      \node[state] (s_6) [below=of s_4] {$s_6$};

      \path[->] (s_1) edge [near start] node {$a$} (s_2)
      (s_2) edge [bend right, swap] node {$b,c$} (s_3)
      (s_2) edge [bend left] node {$b,c$} (s_4)
      (s_3) edge [bend right, swap] node {$a:\frac{2}{3}$} (s_2)
      (s_3) edge [swap] node {$a:\frac{1}{3}$} (s_5) 
      (s_4) edge [bend left] node {$a:\frac{3}{4}$} (s_2) 
      (s_4) edge node {$a:\frac{1}{4}$} (s_6);
    \end{tikzpicture}
  \end{adjustbox}
  }
\\  
  \begin{adjustbox}{scale=0.9}
    \begin{tikzpicture}[shorten >=1pt,%
      inner sep = 0.5mm, %
      node distance=1.5cm,%
      every state/.style={circle, inner sep=3pt, minimum size=2mm}, %
      on grid,auto,%
      initial text=,%
      highlight/.style={very thick, draw=red, text=red},%
      bend angle=30]
      \node[state] (s_1) {$s_1$};
      \node[state] (s_2) [below=1.25cm of s_1] {$s_2$};
      \node[state] (s_3) [left=1.25cm of s_2] {$s_3$};
      \node[state] (s_4) [right=1.25cm of s_2] {$s_4$};
      \node[state] (s_5) [below=of s_3] {$s_5$};
      \node[state] (s_6) [below=of s_4] {$s_6$};

      \path[->] (s_1) edge [near start] node {$a$} (s_2)
      (s_2) edge [swap] node {$b$} (s_3)
      (s_2) edge node {$c$} (s_4)
      (s_3) edge [swap] node {$a:\frac{1}{3}$} (s_5) 
      (s_4) edge node {$a:\frac{1}{4}$} (s_6);
    \end{tikzpicture}
  \end{adjustbox}
\\[1em]

  \subfloat[Example outcomes]{\label{fig:dtrees}
    \begin{adjustbox}{scale=0.9}
    \begin{tikzpicture}[shorten >=1pt,%
      inner sep = 0.5mm, %
      node distance=1.5cm,%
      every state/.style={circle, inner sep=3pt, minimum size=2mm}, %
      on grid,auto,%
      initial text=,%
      highlight/.style={very thick, draw=red, text=red},%
      bend angle=30]
      \node[state] (s_1) {$s_1$};
      \node[state] (s_2) [below=1.25cm of s_1] {$s_2$};
      \node[state] (s_3) [left=1.25cm of s_2] {$s_3$};
      \node[state] (s_4) [right=1.25cm of s_2] {$s_4$};
      \node[state] (s_5) [below=of s_3] {$s_5$};
      \node[state] (s_21) [below=of s_4] {$s_2$};
      \node[state] (s_31) [left=1.25cm of s_21] {$s_3$};
      \node[state] (s_41) [right=1.25cm of s_21] {$s_4$};
      \node[state] (s_51) [below=of s_31] {$s_5$};
      \node[state] (s_6) [below=of s_41] {$s_6$};

      \path[->] (s_1) edge [near start] node {$a$} (s_2)
      (s_2) edge [swap] node {$b$} (s_3)
      (s_2) edge node {$c$} (s_4)
      (s_3) edge [swap] node {$a:\frac{1}{3}$} (s_5) 
      (s_4) edge node {$a:\frac{3}{4}$} (s_21)
      (s_21) edge [swap] node {$c$} (s_31)
      (s_21) edge node {$b$} (s_41)
      (s_31) edge [swap] node {$a:\frac{1}{3}$} (s_51) 
      (s_41) edge node {$a:\frac{1}{4}$} (s_6);
    \end{tikzpicture}
  \end{adjustbox}
  }
  \caption{Example PLTS and selected outcomes}
  \label{fig:example}
\end{figure}

An example PTLS and two of its outcomes are shown in
Fig.~\ref{fig:example}.  In the figure, transitions are usually
annotated with their action label and probability; the probability is
omitted when it is $1$.  Note that there are two transitions labeled
$b$ from state $s_2$ reflecting internal nondeterminism.  If we label
transition from $s_2$ to $s_3$ only with $b$ (omitting $c$) and that
from $s_2$ to $s_4$ only with $c$ (omitting $b$), we get an
\emph{RPLTS} with only probabilistic and external choices.

Note that, with d-trees, we have the distinction between linear- and
branching-time semantics for the nondeterministic choices which are
internal and external, respectively.  Since a d-tree is defined to be
\emph{deterministic}, all of the internal 
choices (both probabilistic and nondeterministic) are resolved, but
the external choices remain.  
Meanwhile, a property of a PLTS will
hold for some subset of its maximal d-trees.  In order to give the
property a probability, we need a measure of this set.  This is
straightforward for an RPLTS, as all internal choices are
probabilistic; but we will need to do more for PLTSs with internal
nondeterministic choice.

Thus, the subsequent concepts apply \emph{only to RPLTSs}, and we will
extend them to PLTSs in Sect.~\ref{sec:xpl}.  A finite RPLTS d-tree
has finite measure, which can be computed from the values of the
probabilistic choices in the trees, i.e., its edges.  An infinite
d-tree will typically have zero measure, but an infinite set of these
may have positive measure.  Instead, intuitively, we consider the
probability of some finite prefix, which again is the product of the
probabilities of all the edges.  Formally, a \emph{basic cylindrical
  subset} of $\mathcal{M}_L(s)$ contains all trees sharing a given
prefix.  Letting $s \in S$, and $T \in \mathcal{T}_L(s)$ to be finite,
$B_T = \{T' \in \mathcal{M}_L \mid T \subseteq T'\}$.  The measure of
$B_T$ is:
\begin{equation}
  \mathsf{m}(B_T) = \prod_{(\sigma, a, \sigma') \in \mathsf{edges}(T)}
  P(\mathsf{last}(\sigma), a, \mathsf{last}(\sigma'))
\end{equation}
From here, a probability measure $\mathsf{m}_s \fcn{\mathcal{B}_s}
[0,1]$ on the smallest field of sets $\mathcal{B}_s$ is generated from
subsets $B_T$ with $\mathsf{m}_s(B_T) =
\mathsf{m}(B_T)$~\cite[Definition~8]{gpl}.

\subsection{GPL Syntax}\label{sec:gsyn}
GPL has two different kinds of formulae.  State formulae depend
directly only on the given state.  \emph{Fuzzy formulae} depend on
\emph{outcomes}.  We give the syntax of GPL, with $X \in Var$, $a \in
Act$, $A \in Prop$, and $0 \leq p \leq 1$, for state formulae, $\phi$,
and fuzzy formulae, $\psi$, as:
\begin{equation*}
  \begin{array}{l}
    \phi ::= A \mid {\neg A} \mid {\phi \land \phi} \mid {\phi \lor
      \phi} \mid \mathsf{Pr}_{>p} \psi \mid \mathsf{Pr}_{\geq p} \psi
    \\

    \psi ::= \phi \mid X \mid {\psi \land \psi} \mid {\psi \lor
      \psi} \mid {\diam{a} \psi} \mid {[a] \psi} \mid {\mu X.\psi}
    \mid {\nu X.\psi}
  \end{array}
\end{equation*}
Note that only atomic propositions may be negated, but every operator
has its dual given in the syntax.  The propositional connectives,
$\land$ and $\lor$, can be used on both state and fuzzy formulae.
Operators $\mu X.\psi$ and $\nu X.\psi$ are least and greatest fixed
point operators for the ``equation'' $X = \psi$.  Additionally, fuzzy
formulae must be alternation-free, which prohibits a kind of mixing of
least and greatest fixed points, and a formula $\psi$ used to
construct state formulae $\mathsf{Pr}_{>p}\psi$ and $\mathsf{Pr}_{\geq
  p}\psi$ may not have any free variables.  These operators check the
probability for a fuzzy formula $\psi$ ($\mathsf{Pr}_{>p}$ and
$\mathsf{Pr}_{\geq 1-p}$ are duals).  The semantics of GPL is given in
terms of RPLTS d-trees.  In that interpretation, \emph{diamond}
implies \emph{box}: $\diam{a} \psi$ means that there is an
$a$-transition and it satisfies $\psi$; $[a] \psi$ means that if there
is an $a$-transition, it satisfies $\psi$.  We also use a set $\alpha
\subseteq Act$ for the modalities, reading $\diam{\alpha} \psi$ as
$\bigvee\limits_{a \in \alpha} \diam{a} \psi$ and $[\alpha] \psi$ as
$\bigwedge\limits_{a \in \alpha} [a] \psi$.  When we write ``$-$'' for
$\alpha$, that represents $Act$.

\begin{table}
  \caption{GPL/XPL semantics: fuzzy formulae}\label{tab:gff}
  \vspace{0.1in}
  \centering
  \begin{tabular}{rl}
    \hline

    $\Theta_L(\phi)e$ & $= \bigcup\limits_{s \models_L \phi}
    \mathcal{M}_L(s)$, where $\phi$ is a closed formula, \\

    $\Theta_L(X)e$ & $= e(X)$, \\

    $\Theta_L(\diam{a} \psi)e$ & $= \{T \in \mathcal{M}_L \mid \exists
    T' : T \tarrow{a} T' \land T' \in \Theta_L(\psi)e\}$, \\

    $\Theta_L([a] \psi)e$ & $= \{T \in \mathcal{M}_L \mid (T
    \tarrow{a} T') \Rightarrow T' \in \Theta_L(\psi)e\}$, \\

    $\Theta_L(\psi_1 \land \psi_2)e$ & $= \Theta_L(\psi_1)e \cap
    \Theta_L(\psi_2)e$, \\

    $\Theta_L(\psi_1 \lor \psi_2)e$ & $= \Theta_L(\psi_1)e \cup
    \Theta_L(\psi_2)e$, \\

    $\Theta_L(\mu X.\psi)e$ & $= \bigcup\limits_{i=0}^\infty M_i$,
    where $M_0 = \emptyset$ and $M_{i+1} = \Theta_L(\psi)e[X \mapsto
      M_i]$, \\

    $\Theta_L(\nu X.\psi)e$ & $= \bigcap\limits_{i=0}^\infty N_i$,
    where $N_0 = \mathcal{M}_L \mbox{ and } N_{i+1} =
    \Theta_L(\psi)e[X \mapsto N_i]$. \\ \hline
  \end{tabular}
\end{table}

\subsection{GPL Semantics}\label{sec:gsem}
\begin{table}
  \caption{GPL semantics: state formulae}\label{tab:gsf}
  \vspace{0.1in}
  \centering
  \begin{tabular}{ll}
    \hline

    $s \models_L A$ &iff $A \in I(s)$, \\

    $s \models_L \neg A$ &iff $A \notin I(s)$, \\

    $s \models_L \phi_1 \land \phi_2$ &iff $s \models_L \phi_1$ and $s
    \models_L \phi_2$, \\

    $s \models_L \phi_1 \lor \phi_2$ &iff $s \models_L \phi_1$ or $s
    \models_L \phi_2$, \\

    $s \models_L \Pr_{>p} \psi$ &iff $\mathsf{m}_s(\Theta_{L,s}(\psi))
    > p$, \\

    $s \models_L \Pr_{\geq p} \psi$ &iff
    $\mathsf{m}_s(\Theta_{L,s}(\psi)) \geq p$. \\

    \hline
  \end{tabular}
\end{table}
We define the semantics of GPL with respect to a fixed RPLTS $L = (S,
\delta, P, I)$, where $\Phi$ and $\Psi$ are the sets of all state and
fuzzy formulae, respectively.  A function $\Theta_L \fcn{\Psi}
2^{\mathcal{M}_L}$, augmented with an extra \emph{environment}
parameter $e \fcn{Var} 2^{\mathcal{M}_L}$, returns the set of outcomes
satisfying a given fuzzy formula, defined inductively in
Table~\ref{tab:gff}.

For a given $s \in S$, $\Theta_{L,s}(\psi) = \Theta_L(\psi) \cap
\mathcal{M}_L(s)$.  The relation $\models_L \subseteq S \times \Phi$
indicates when a state satisfies a state formula, and it is defined
inductively in Table~\ref{tab:gsf}.  Note that the definitions for
$\Theta_L$ and $\models_L$ are mutually recursive.

There are two properties of GPL fuzzy formulae that are important for
the completeness of the GPL model checking algorithm.  First, we have
distributivity on $\emph{box}$ and $\emph{diamond}$
\cite[Lemma~1]{gpl}:
\begin{lemma}[Distributivity on modal operators]
  \label{lem:distr}
  Letting $\oplus \in \{\land, \lor\}$:
  \begin{equation}
    \begin{array}{rcl}
    \Theta_L([a] \psi_1 \oplus [a] \psi_2) &=& \Theta_L\big([a](\psi_1
    \oplus \psi_2)\big) \\

    \Theta_L(\diam{a} \psi_1 \oplus \diam{a} \psi_2) &=&
    \Theta_L\big(\diam{a}(\psi_1 \oplus \psi_2)\big) \\

    \Theta_L([a] \psi_1 \land \diam{a} \psi_2) &=&
    \Theta_L\big(\diam{a}(\psi_1 \land \psi_2)\big)
    \end{array}
  \end{equation}
\end{lemma}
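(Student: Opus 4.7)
The plan is to prove each of the three equalities by double inclusion, leveraging a single structural fact about d-trees. Since every d-tree $T$ is deterministic (in the sense of Sect.~\ref{sec:rplts}), for each action $a$ there is \emph{at most one} d-tree $T'$ with $T \tarrow{a} T'$: two such successors would give two partial computations sharing a prefix and then disagreeing on the target of an equally-labeled transition, contradicting determinism. I would state this uniqueness once as a preliminary observation and then invoke it repeatedly.

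First I would dispose of the routine directions. For $\oplus = \land$ applied to either $[a]$ or $\diam{a}$, both inclusions follow by unfolding Table~\ref{tab:gff} and using $\Theta_L(\psi_1 \land \psi_2) = \Theta_L(\psi_1) \cap \Theta_L(\psi_2)$; no determinism is needed. For $\oplus = \lor$ applied to $[a]$ or $\diam{a}$, the inclusion from the distributed form to the non-distributed form (i.e., $\Theta_L([a]\psi_1 \lor [a]\psi_2) \subseteq \Theta_L([a](\psi_1 \lor \psi_2))$ and similarly for $\diam{a}$) follows directly from monotonicity of $\Theta_L$ in its argument: anything satisfying $\psi_i$ satisfies $\psi_1 \lor \psi_2$.

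The interesting directions are where determinism does the work. For $\Theta_L([a](\psi_1 \lor \psi_2)) \subseteq \Theta_L([a]\psi_1 \lor [a]\psi_2)$, take $T$ in the left-hand side and split on whether an $a$-successor exists: if none, then $T$ vacuously belongs to $\Theta_L([a]\psi_1)$; if one exists, it is unique by determinism, and since $T' \in \Theta_L(\psi_1) \cup \Theta_L(\psi_2)$, we may pick whichever disjunct contains $T'$ and conclude $T \in \Theta_L([a]\psi_i)$. The mixed equality is handled analogously: given $T \in \Theta_L([a]\psi_1 \land \diam{a}\psi_2)$, the $\diam{a}\psi_2$ conjunct supplies a witness $T \tarrow{a} T'$ with $T' \in \Theta_L(\psi_2)$; by determinism this is the \emph{only} $a$-successor of $T$, so the $[a]\psi_1$ conjunct forces $T' \in \Theta_L(\psi_1)$, giving $T' \in \Theta_L(\psi_1 \land \psi_2)$ and hence $T \in \Theta_L(\diam{a}(\psi_1 \land \psi_2))$. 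The reverse inclusion for the mixed case is again immediate, since a unique witness for $\diam{a}(\psi_1 \land \psi_2)$ simultaneously witnesses $\diam{a}\psi_2$ and certifies $[a]\psi_1$.

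The main obstacle is conceptual rather than technical: making the proof transparent requires isolating the "at most one $a$-successor" property of d-trees up front, because the corresponding distributive laws fail for ordinary (non-deterministic) Kripke structures. Once that observation is factored out, every case reduces to a one-line set-theoretic unfolding of the clauses in Table~\ref{tab:gff}.
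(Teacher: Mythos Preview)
The paper does not actually prove this lemma; it is quoted as \cite[Lemma~1]{gpl} and later justified for XPL only by the remark that it ``deals with sets of d-trees, but not their measure.'' Your plan is the right one and is essentially what the original proof in~\cite{gpl} does: isolate the ``at most one $a$-successor'' property of d-trees and reduce each case to a set-theoretic unfolding of Table~\ref{tab:gff}.

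There is, however, one concrete slip. You assert that for $\oplus = \land$ applied to $\diam{a}$, ``no determinism is needed.'' That is false for the inclusion
\[
\Theta_L(\diam{a}\psi_1 \land \diam{a}\psi_2) \;\subseteq\; \Theta_L\big(\diam{a}(\psi_1 \land \psi_2)\big).
\]
From $T \in \Theta_L(\diam{a}\psi_1) \cap \Theta_L(\diam{a}\psi_2)$ you get an $a$-successor $T_1 \in \Theta_L(\psi_1)$ and an $a$-successor $T_2 \in \Theta_L(\psi_2)$; only the determinism of $T$ forces $T_1 = T_2$, giving a single witness in $\Theta_L(\psi_1) \cap \Theta_L(\psi_2)$. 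In an ordinary Kripke structure this inclusion fails for exactly this reason, so this case belongs with your ``interesting directions,'' not the routine ones. A second, smaller omission: you never treat the inclusion $\Theta_L\big(\diam{a}(\psi_1 \lor \psi_2)\big) \subseteq \Theta_L(\diam{a}\psi_1 \lor \diam{a}\psi_2)$. It is indeed routine (existential quantification commutes with disjunction, no determinism required), but as written your case split leaves it uncovered. Once these two points are fixed, the argument is complete.
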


Second, we can relate the probability of a conjunction with that of a
disjunction and compute the effect of taking a step
\cite[Lemma~$2$]{gpl}:
\begin{multline}
    \label{eqn:gl21}
    \mathsf{m}_s(\Theta_{L,s}(\psi_1 \lor \psi_2)) =
    \mathsf{m}_s(\Theta_{L,s}(\psi_1)) +
    \mathsf{m}_s(\Theta_{L,s}(\psi_2)) \: - \\ - \:
    \mathsf{m}_s(\Theta_{L,s}(\psi_1 \land \psi_2))
\end{multline}
\begin{equation}
  \label{eqn:gl22}
  \mathsf{m}_s(\Theta_{L,s}(\diam{a} \psi)) = \sum_{s':(s,a,s') \in
    \delta} P(s,a,s') \cdot \mathsf{m}_{s'}(\Theta_{L,s'}(\psi))
\end{equation}
Additionally, although there is no negation operator in the syntax, we
can write the negation of a fuzzy formula $\psi$,
$\mathsf{neg}(\psi)$, and of a state formula $\phi$,
$\mathsf{neg}(\phi)$, such that, for any RPLTS $L$ and state $s$
(\cite[Lemma~$3$]{gpl}):
\begin{equation*}
  \Theta_{L,s}(\mathsf{neg}(\psi)) = \mathcal{M}_L(s) -
  \Theta_{L,s}(\psi) \quad \mbox{and} \quad \models_L
  \mathsf{neg}(\phi) \iff s \not\models_L \phi \enspace .
\end{equation*}
The proof involves switching all the operators to their duals.

\section{XPL}\label{sec:xpl}
To resolve the nondeterministic transitions in a PLTS, we additionally
require a scheduler.  Recall, from Sect.~\ref{sec:rplts}, that
$\mathcal{C}_L$ is the set of all partial computations $\sigma$ of
$L$.
\begin{definition}[Scheduler]\label{def:sched}
  A scheduler for a PLTS $L$ is a function $\gamma \fcn{\mathcal{C}_L
    \times Act} \bbbn$, such that if an action $a$ is present at $s =
  \kw{last}(\sigma)$, then $\gamma(\sigma, a) = c$ implies that
  $\sum_{s'} P(s,a,s',c) = 1$.
\end{definition}
Note that we have defined deterministic schedulers, which are also
aware of their relevant histories.  Given a scheduler $\gamma$ for a
PLTS $L$, we have a (countable) RPLTS $L_\gamma$, where $S_{L,\gamma}
\subseteq \mathcal{C}_L$ and so $\delta_{L,\gamma} \subseteq
\mathcal{C}_L \times Act \times \mathcal{C}_L$.  We define a
probability distribution:
\begin{definition}[Combined probability]\label{def:cmpr}
  The probability distribution of a PLTS $L$ with scheduler $\gamma$
  is a function, $P_{L,\gamma} \fcn{\delta_{L,\gamma}}
  [0,1]$, where:
  \begin{equation}
    P_{L,\gamma}(\sigma, a, \sigma') = P_L(\mathsf{last}(\sigma), a,
    \mathsf{last}(\sigma'), \gamma(\sigma, a))
  \end{equation}
\end{definition}
We also let $P_{L,\gamma}(\sigma, a, \sigma') = 0$ when $(\sigma, a,
\sigma') \notin \delta_{L,\gamma}$.

Recall, from Sect.~\ref{sec:rplts}, that the basic cylindrical
subset $B_T$ contains all maximal d-trees sharing the prefix tree $T$.
For these subsets, we define the probability measure:
\begin{definition}[Probability measure]
  For a PLTS $L$ with scheduler $\gamma$, the probability measure of a
  basic cylindrical subset $B_T$ is defined by a partial function
  $\mathsf{m}^\gamma \fcn{2^{\mathcal{M}_L}} [0,1]$, where:
  \begin{equation}
    \mathsf{m}^\gamma(B_T) = \prod\limits_{(\sigma, a, \sigma') \in
      \mathsf{edges}(T)} P_{L,\gamma}(\sigma, a, \sigma')
  \end{equation}
\end{definition}
Since $\mathsf{m}^\gamma$ may be considered as defined for an RPLTS,
we can extend it to a measure $\mathsf{m}^{\gamma}_s$ as in
Sect.~\ref{sec:rplts}.

\subsection{XPL Syntax}\label{sec:esyn}
Now we give the XPL syntax, with $\mathord{\bowtie} \in \{>, \geq, <,
\leq\}$:
\begin{equation*}
  \begin{array}{l}
    
    \phi ::= A \mid {\neg A} \mid {\phi \land \phi} \mid {\phi \lor
      \phi} \mid {\mathsf{Pr}_{\bowtie p} \psi} \\

    \psi ::= \phi \mid X \mid {\psi \land \psi} \mid {\psi \lor
      \psi} \mid {\diam{a} \psi} \mid {[a] \psi} \mid {\mu X.\psi}
    \mid {\nu X.\psi}
  \end{array}
\end{equation*}
The fuzzy formulae remain the same as in GPL.  $\mathsf{Pr}$ assumes
maximizing schedulers, i.e., we compare against the supremum
probabilities over all schedulers.  Note that $\mathsf{Pr}_{> p}$ is
no longer the dual of $\mathsf{Pr}_{\geq 1-p}$, which is why we allow
the ``less than'' comparisons, as well; moreover, analyzing a fuzzy
formula $\psi$ over minimizing schedulers is essentially equivalent to
considering $\kw{neg}(\psi)$ over maximizing schedulers.

\subsection{XPL Semantics}\label{sec:esem}
\begin{table}
  \caption{XPL semantics: state formulae}\label{tab:esf}
  \vspace{0.1in}
  \centering
  \begin{tabular}{ll}
    \hline

    $s \models_L A$ &iff $A \in I(s)$, \\
    
    $s \models_L \neg A$ &iff $A \notin I(s)$, \\

    $s \models_L \phi_1 \land \phi_2$ &iff $s \models_L \phi_1$ and $s
    \models_L \phi_2$, \\

    $s \models_L \phi_1 \lor \phi_2$ &iff $s \models_L \phi_1$ or $s
    \models_L \phi_2$, \\

    $s \models_L \mathsf{Pr}_{\bowtie p} \psi$ &iff $\sup_\gamma
    \mathsf{m}_s^\gamma(\Theta_{L,s}(\psi)) \bowtie p$, \\

    \hline
  \end{tabular}
\end{table}
The semantics of XPL changes from GPL only due to the measure of the
PLTS outcomes.  In particular, we retain the same semantics on
\emph{diamond} and \emph{box}.  The semantics is defined with respect
to a fixed PLTS $L = (S, \delta, P, I)$.  The function $\Theta_L
\fcn{\Psi} 2^{\mathcal{M}_L}$ remains the same, while $\models_L
\subseteq S \times \Phi$ differs for the probabilistic operators.

\begin{definition}[XPL semantics]
  The semantics for the state formulae is given in
  Table~\ref{tab:esf}.  For the fuzzy formulae, the semantics are as
  in Table~\ref{tab:gff}.
\end{definition}

Note the use of $\sup$ and $\inf$ in Table~\ref{tab:esf}.  We refer to
the value $\sup_\gamma \mathsf{m}_s^\gamma(\Theta_{L,s}(\psi))$ as a
\emph{probabilistic value} and write it as
$\mathsf{Pr}_{L,s}(\psi)$ (\cite{weakb2} calls this a
\emph{capacity}).  Unlike in GPL, we may not always be able to compute
it with a model checking algorithm.

\subsection{Separability of Fuzzy Formulae}\label{sec:sep}
With internal nondeterminism, we lose the general relation between
conjunctions and disjunctions, as in (\ref{eqn:gl21}).  However, since
we are maximizing (or minimizing) over schedulers, we would want the
relation in (\ref{eqn:ndf}).
\begin{equation}\label{eqn:ndf}
  \mathsf{Pr}_{L,s}(\psi_1 \lor \psi_2) \stackrel{?}{=}
  \mathsf{Pr}_{L,s}(\psi_1) + \mathsf{Pr}_{L,s}(\psi_2) -
  \mathsf{Pr}_{L,s}(\psi_1 \land \psi_2)
\end{equation}

This requires that the optimal strategy be the same for $\psi_1$,
$\psi_2$, $\psi_1 \land \psi_2$, and $\psi_1 \lor \psi_2$; in general,
these may all be distinct.  Instead, we will seek to delay the
application of all conjunctions and disjunctions until the two sides
are \emph{independent}, primarily through repeated application of
Lemma~\ref{lem:distr}, which holds for XPL as well because it deals
with sets of d-trees, but not their measure.  For example, we can
rewrite $\psi_a = \nu X.\diam{a}\diam{b}X \lor \diam{a}\diam{c}X$ as
$\nu X.\diam{a}(\diam{b}X \lor \diam{c}X)$.
We generalize this to a syntactic notion of \emph{separability},
defined below.  It will be useful to view a fuzzy formula as a kind of
an \emph{and-or tree}.

\begin{definition}[And-or tree]\label{def:and-or}
  The and-or tree of a fuzzy formula $\psi$, $AO(\psi)$ is a node
  labeled by $\oplus$, where $\oplus \in \{\land, \lor\}$, with
  children $AO(\psi_1)$ and $AO(\psi_2)$ when $\psi = \psi_1 \oplus
  \psi_2$, and a leaf $\psi$ otherwise.
\end{definition}

We can flatten this tree with the straightforward flattening operator,
where, e.g., the tree $\land(\psi_1, \dots, \land(\psi_2, \psi_3))$
may be flattened to $\land(\psi_1, \dots, \psi_2, \psi_3)$.  Note that
flattened trees have alternating $\land$ and $\lor$ nodes.  A
(conjunctive) set of formulae $F$ corresponds to a flattened and-or
tree with the root node labeled by $\land$ and having the elements of
$F$ as leaves.  We will assume $AO(\psi)$ refers to the flattened
tree.

A subformula of $\psi$ of the form $\diam{a}\psi'$ or $[a]\psi'$ is
called a \emph{modal subformula} of $\psi$.  We say that $\psi'$ is an
\emph{unguarded subformula} of $\psi$
if it is a leaf in $AO(\psi)$.
The GPL model checking algorithm requires bound variables to be
guarded by actions (i.e., $\mu X.([a] X \land \dots)$ is fine, but
$\mu X.(X \land \dots)$ is not)~\cite{gpl}, and we adopt this
requirement as well.

\begin{definition}[Formula Transformations]\label{def:formtrans}
\ 
  \begin{itemize}
  \item The \emph{fixed-point expansion} of $\psi$, denoted by
    $\fpe(\psi)$, is a formula $\psi'$ obtained by expanding any
    unguarded subformula of the form $\sigma X.\psi_X$ to
    $\psi_X[\sigma X.\psi_X/X]$ where $\sigma \in \{\mu, \nu\}$.

  \item We say that a formula is non-probabilistic if it is a state
    formula, or of the form $\diam{a} \phi$ and $[a] \phi$ for $a \in
    Act$ and $\phi \in \{\mathsf{tt}, \mathsf{ff}\}$.  The
    \emph{purely probabilistic abstraction} of a fuzzy formula $\psi$,
    denoted by $\probabs(\psi)$, is a formula obtained by removing
    unguarded non-probabilistic subformulae (i.e., $\psi' \land \phi$,
    where $\phi$ is non-probabilistic, becomes $\psi'$, etc.).

  \item A \emph{grouping} of a formula $\psi$, denoted by
    $\groupmodalities(\psi)$, groups modalities in a formula using
    distributivity. Formally, $\groupmodalities$ maps $\psi$ to a
    $\psi'$ that is equivalent to $\psi$ based on the equivalences in
    Lemma~\ref{lem:distr}, applied left-to-right as much as possible
    on the top level.
  \end{itemize}  
\end{definition}

At a high level, a necessary condition of separability is that the
actions guarding distinct conjuncts and disjuncts of a formula are
distinct as well.

\begin{definition}[Action set]\label{def:act}
  The \emph{action set} of a formula $\psi$, denoted by
  $\mathsf{action}(\psi)$
  is the set of actions appearing at unguarded modal subformulae of
  $\psi$:
  \begin{itemize} 
  \item $\mathsf{action}(\phi) = \emptyset$;

  \item $\mathsf{action}(\diam{a} \psi) = \mathsf{action}([a] \psi) =
    \{a\}$;

  \item $\mathsf{action}(\psi_1 \land \psi_2) = \mathsf{action}(\psi_1
    \lor \psi_2) = \mathsf{action}(\psi_1) \cup
    \mathsf{action}(\psi_2)$;

  \item $\mathsf{action}(\mu X.\psi) = \mathsf{action}(\nu X.\psi) =
    \mathsf{action}(\psi)$.
  \end{itemize}
\end{definition}

We can now define separability based on action sets of formulae
as follows.

\begin{definition}[Separability]\label{def:isep}
  The set of all separable formulae is the largest set $\mathcal{S}$
  such that $\forall \psi \in \mathcal{S}$, if $\psi' =
  \groupmodalities(\probabs(\fpe(\psi)))$, then 
  \begin{enumerate}
  \item every subformula of $\psi'$ is in $\mathcal{S}$, and 
  \item if $\psi' = \psi_1 \oplus \psi_2$ where $\oplus \in \{\land,
    \lor\}$, then $\mathsf{action}(\psi_1) \cap \mathsf{action}(\psi_2) =
    \emptyset$.
  \end{enumerate}
  A formula $\psi$ is \emph{separable} if $\psi \in \mathcal{S}$.
\end{definition}

Below we illustrate separability of formulae.  Let $\psi_1$-$\psi_4$
be all separable and distinct, and also let $\psi_1 \lor \psi_2$ and
$\psi_3 \lor \psi_4$ be separable.

Note that $\groupmodalities$ uses only distributivity of the modal
operators over ``$\land$'' and ``$\lor$'', and not the distributivity
of the boolean operators themselves.  Consequently, a separable
formula may be equivalent to a non-separable formula.  

\begin{example}[Separable formula with equivalent non-separable formula]\label{ex:sepr}
  The formula $\psi_s$ is separable.
  \begin{equation}
    \label{eqn:sepr}
    \psi_s = [a](\psi_1 \lor \psi_2) \land [b](\psi_3 \lor
    \psi_4)
  \end{equation}
  The DNF version of $\psi_s$, $\psi'_s$, is not separable since
  action sets of disjuncts overlap.
  \begin{equation} 
    \label{eqn:dist}
    \psi'_s = ([a]\psi_1 \land [b]\psi_3) \lor ([a]\psi_1 \land
         [b]\psi_4) \lor ([a]\psi_2 \land [b]\psi_3) \lor ([a]\psi_2
         \land [b]\psi_4)
  \end{equation}
\end{example}

This is important because we need the subformulae of a separable
formula to also be separable.

\begin{example}[Non-separable formula]\label{ex:entg}
  The formula $\psi_e$ is a subformula of $\psi'_s$
  (\ref{eqn:dist}), is not separable, and has no equivalent
  separable formula:
  \begin{equation}\label{eqn:entg}
    \psi_e = ([a]\psi_1 \land [b]\psi_4) \lor ([a]\psi_2 \land
        [b]\psi_3)
  \end{equation}
  With $\psi_e$, we need to satisfy $\psi_1$ or $\psi_2$ following an
  $a$ action, and likewise for $\psi_3$ or $\psi_4$ following a $b$
  action.  An equivalent separable formula would thus have to include
  $[a](\psi_1 \lor \psi_2)$ and $[b](\psi_3 \lor \psi_4)$, but this
  would also be satisfied by, e.g., outcomes satisfying only
  $[a]\psi_1 \land [b]\psi_3$.
\end{example}

We say that a formula is \emph{entangled} at a state if it is not
(equivalent to) a separable formula even after considering that
state's specific characteristics.  For instance, $\psi_e$ is entangled
only at states with both $a$ and $b$ actions present.  Even when
considering only states where the actions relevant to entanglement are
present, a formula may be entangled at some states and not at others.

\begin{example}[Entanglement on $a$ and $b$ depends on $c$]\label{ex:ents}
  The formula $\psi_c$ reduces to $\psi_s'$ (\ref{eqn:sepr}) at states
  that have a $c$-transition, and to $\psi_e$ (\ref{eqn:entg})
  otherwise.
  \begin{multline}\label{eqn:ents}
    \psi_c = ([a]\psi_1 \land [b]\psi_3 \land \diam{c}\kw{tt}) \lor
    ([a]\psi_1 \land [b]\psi_4) \lor ~ \\ \lor ([a]\psi_2 \land
        [b]\psi_3) \lor ([a]\psi_2 \land [b]\psi_4 \land
        \diam{c}\kw{tt}).
  \end{multline}
\end{example}

There are also non-separable formulae that nonetheless would not be
entangled at any state of an arbitrary PLTS.
\begin{example}[Never-entangled non-separable formula]\label{ex:nens}
  For the formula $\psi_d$, $\probabs(\psi_d) = \psi_e$, but at any
  state it is equivalent either to $[a]\psi_1 \land [b]\psi_4$ or to
  $[a]\psi_2 \land [b]\psi_3$.
  \begin{equation}\label{eqn:nens}
    \psi_d = ([a]\psi_1 \land [b]\psi_4 \land [c]\kw{ff}) \lor
    ([a]\psi_2 \land [b]\psi_3 \land \diam{c}\kw{tt}).
  \end{equation}
\end{example}

Since $\groupmodalities$ combines modal subformulae with a common
action, we have the following important consequence.
\begin{remark}
All conjunctive formulae and disjunctive formulae are separable.
\end{remark}

\section{Model Checking XPL Formulae}\label{sec:modchk}
We outline a model checking procedure for XPL formulae for a
fixed PLTS $L = (S, \delta, P, I)$, along similar lines to the GPL
model checking algorithm in~\cite[Sect.~$4$]{gpl}.  The model
checking procedure succeeds whenever the given formula is separable.

\begin{definition}[Fisher-Ladner closure] Given a formula $\psi$, its 
\emph{Fisher-Ladner closure}, $Cl(\psi)$, is the smallest set such
that the following hold:
  \begin{itemize}
  \item $\psi \in Cl(\psi)$.
  \item If $\psi' \in Cl(\psi)$, then:
    \begin{itemize}
    \item if $\psi' = \psi_1 \land \psi_2$ or $\psi_1 \lor \psi_2$, then
      $\psi_1, \psi_2 \in Cl(\psi)$;

    \item if $\psi' = \diam{a} \psi''$ or $[a] \psi''$ for some $a \in
      Act$, then $\psi'' \in Cl(\psi)$;

    \item if $\psi' = \sigma X.\psi''$, then $\psi''[\sigma X.\psi'' /
      X] \in Cl(\psi)$, with $\sigma$ either $\mu$ or $\nu$.
    \end{itemize}
 \end{itemize}
\end{definition}

Also, we let $\mathcal{AO}(S)$ represent the set of and-or trees with
elements of a set $S$ as leaves.  The core of the model checking
algorithm is the construction of a \emph{dependency graph} $\depgr(s,
\psi)$, to compute $\mathsf{Pr}_{L,s}(\psi)$, such that all the
formulae appearing in the graph will be in the set
$\mathcal{AO}(Cl(\psi))$.  When constructing a dependency graph, in
order to divide a formula by actions, we transform it into a
\emph{factored} form, in a similar manner to checking separability.
If we are unable to transform a formula into a factored form, as can
happen when a formula is non-separable, the graph construction
terminates with failure.

\begin{definition}[Factored form]
  A factored formula $\psi$
  can be trivial, when $\psi \in \{\mathsf{tt}, \mathsf{ff}\}$.
  Otherwise, every leaf of $AO(\psi)$ is in the \emph{action} form,
  $\diam{a} \psi'$, and no action may guard more than one leaf.
\end{definition}

Given a state $s$, a formula $\psi'$ can be transformed into a
semantically equivalent one $\psi''$ that is in factored
form\footnote{We may use the DNF version of $\psi'$ to check for
  equivalence with existing nodes, but not for finding the factored
  form.} as: $\psi'' =
\groupmodalities\left(PE\big(s,\fpe(\psi')\big)\right)$.  $PE(s,
\psi')$ \emph{partially evaluates} $\psi'$, by evaluating unguarded
non-probabilistic subformulae of $\psi'$ as well as all unguarded
modal subformulae with actions absent at state $s$, yielding
$\mathsf{tt}$ or $\mathsf{ff}$ for each, and simplifying the
result.\footnote{After applying $\groupmodalities$, we may have a leaf
  in action form $\diam{a} \psi'_a \notin \mathcal{AO}(Cl(\psi))$.
  Then, we may view an action $a$ as a prefix label on the subtree
  $\psi'_a \in \mathcal{AO}(Cl(\psi))$.}  Then $\big((s,\psi'),
\varepsilon, (s,\psi'')\big) \in E$.

\begin{definition}[Dependency graph]
  The dependency graph for model checking a formula $\psi$ with
  respect to a state $s$ in PLTS $L$, denoted by $\depgr(s, \psi)$, is
  a directed graph $(N,E)$, where \emph{node set} $N \subseteq S
  \times \mathcal{AO}({Cl(\psi)})$, and \emph{edge set} $E \subseteq N
  \times (Act \cup \{\varepsilon, \varepsilon^\land,
  \varepsilon^\lor\}) \times N$; i.e., the edges are labeled from $Act
  \cup \{\varepsilon, \varepsilon^\land, \varepsilon^\lor\}$.  The
  sets $N$ and $E$ are the smallest such that:
  \begin{itemize}
  \item $(s,\psi) \in N$.

  \item If $(s', \psi') \in N$, $\psi'$ is not in factored form: if
    equivalent $\psi''$ in factored form exists, then $(s',\psi'') \in
    N$ and $((s',\psi'),\varepsilon, (s',\psi'')) \in E$.

  \item If $(s', \psi'_1 \oplus \psi'_2) \in N$, then $(s', \psi'_i)
    \in N$ for $i =1,2$.  Moreover, $((s',\psi'_1 \oplus
    \psi'_2),\varepsilon^\oplus, (s',\psi'_i)) \in E$ for $i=1,2$, and
    $\oplus \in \{\land,\lor\}$.

  \item If $(s', \diam{a}\psi') \in N$, then $(s'', \psi') \in N$ for
    each $s''$ such that $(s', a, s'') \in \delta$.  Moreover, $((s',
    \diam{a}\psi'),a, (s'',\psi')) \in E$.
  \end{itemize}
\end{definition}
If $(s',\psi') \in N$ and $\psi'$ has no factored form, then the
dependency graph construction fails.

When we transform $\psi'$ to the factored form $\psi''$, the semantics
does not change, i.e., $\Theta_{L,s'}(\psi') = \Theta_{L,s'}(\psi'')$.
For the factored formulae, standard XPL semantics applies
(Table~\ref{tab:gff}).  Note that we can assume \emph{action} nodes to
be of the form $(s', \diam{a} \psi')$, as the action $a$ must then be
present at state $s'$.  From this semantics, we also get the
relationships for the probabilistic values.  Here, $\prod$ is the
standard product operator, while $\coprod_{i \in I} x_i = 1 - \prod_{i
  \in I} (1-x_i)$.
\begin{lemma}[Probabilistic values]\label{lem:prv}
  Fix $\depgr(s_0, \psi) = (N, E)$.  The probabilistic value
  $\mathsf{Pr}_{L,s}(\psi')$ for a node $(s, \psi')$ is as
  follows:

  \begin{itemize}
  \item $\mathsf{Pr}_{L,s}(\mathsf{ff}) = 0$ and
    $\mathsf{Pr}_{L,s}(\mathsf{tt}) = 1$.

  \item If $(s, \psi')$ is an \emph{and}-node, then: \\
    $\mathsf{Pr}_{L,s}(\psi') = \prod_{((s,\psi'),
    \varepsilon^\land, (s,\psi'_i)) \in E}
    \mathsf{Pr}_{L,s}(\psi'_i)$.

  \item If $(s, \psi')$ is an \emph{or}-node, then: \\
    $\mathsf{Pr}_{L,s}(\psi') = \coprod_{((s,\psi'),
    \varepsilon^\lor, (s,\psi'_i)) \in E}
    \mathsf{Pr}_{L,s}(\psi'_i)$.

  \item If $(s, \psi')$ is an action node, i.e., $\psi' = \diam{a}
    \psi'_a$, then:
    \begin{equation*}
      \mathsf{Pr}_{L,s}(\psi') = \max_{c \in \bbbn}
      \sum_{((s,\psi'), a, (s', \psi'_a)) \in E} P(s,a,s',c) \cdot
      \mathsf{Pr}_{L,s'}(\psi'_a)
    \end{equation*}

  \item The remaining nodes $(s, \psi')$ have a unique successor $(s,
    \psi'')$ with $\mathsf{Pr}_{L,s}(\psi') =
    \mathsf{Pr}_{L,s}(\psi'')$.
  \end{itemize}
\end{lemma}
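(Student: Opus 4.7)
My plan is to prove the lemma by case analysis on the type of the outgoing edges at the node $(s,\psi')$, appealing at each case to the XPL semantics in Tables~\ref{tab:gff} and~\ref{tab:esf}. The trivial cases are immediate: since $\Theta_{L,s}(\kw{tt})=\mathcal{M}_L(s)$ and $\Theta_{L,s}(\kw{ff})=\emptyset$, we have $\mathsf{Pr}_{L,s}(\kw{tt})=1$ and $\mathsf{Pr}_{L,s}(\kw{ff})=0$ regardless of $\gamma$.

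For the ``remaining'' $\varepsilon$-edge case (where $(s,\psi')$ has a unique successor $(s,\psi'')$ produced by the factored-form construction), I would argue that each ingredient transformation preserves $\Theta_{L,s}$ and therefore $\mathsf{Pr}_{L,s}$. Specifically, $\fpe$ unfolds $\sigma X.\psi_X$ to $\psi_X[\sigma X.\psi_X/X]$, which is semantically equivalent by standard fixed-point reasoning applied to the iterative definitions in Table~\ref{tab:gff}. The partial evaluation $PE(s,\cdot)$ replaces only unguarded non-probabilistic subformulae with their truth values at $s$, which is sound by the definition of $\models_L$. Finally, $\groupmodalities$ applies the equivalences of Lemma~\ref{lem:distr}, which carry over verbatim to XPL because they concern $\Theta_L$ (sets of d-trees) and do not involve any measure.

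For an action node $(s,\diam{a}\psi'_a)$, I would unpack the scheduler formalism. A scheduler $\gamma$ chooses some $c=\gamma(s,a)\in\bbbn$ at the first step from $s$, inducing the distribution $P(s,a,\cdot,c)$ over successors $s'$. Partial computations routed through different successors $s'$ are pairwise disjoint, so the restrictions $\gamma^{s'}$ of $\gamma$ to computations extending $s\tarrow{a}s'$ are mutually independent functions and may be freely recombined. Consequently, $\mathsf{m}_s^\gamma(\Theta_{L,s}(\diam{a}\psi'_a)) = \sum_{s'} P(s,a,s',c)\cdot \mathsf{m}_{s'}^{\gamma^{s'}}(\Theta_{L,s'}(\psi'_a))$, and taking $\sup_\gamma$ splits into the outer $\max$ over $c$ and independent inner suprema over the $\gamma^{s'}$, which approach $\mathsf{Pr}_{L,s'}(\psi'_a)$. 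This reproduces the claimed action formula.

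The hard part is the and/or cases, where I must leverage separability to justify independence. Because the dependency graph is only constructed for formulae admitting a factored form, whenever $(s,\psi'_1\oplus\psi'_2)$ appears, the grouping step has already collapsed common actions, so $\mathsf{action}(\psi'_1)\cap\mathsf{action}(\psi'_2)=\emptyset$. The constraints $\psi'_1$ and $\psi'_2$ impose on a d-tree rooted at $s$ therefore concern disjoint action-subtrees, and the scheduler's choices along these disjoint subtrees are mutually free. From this I would conclude, for any $\gamma$, that $\mathsf{m}_s^\gamma(\Theta_{L,s}(\psi'_1)\cap\Theta_{L,s}(\psi'_2)) = \mathsf{m}_s^\gamma(\Theta_{L,s}(\psi'_1))\cdot \mathsf{m}_s^\gamma(\Theta_{L,s}(\psi'_2))$, by reducing to the cylindrical case where the prefix tree factors into two edge-disjoint halves. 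Since the scheduler restrictions to the disjoint subtrees can be independently optimized and then glued back, the supremum also factors, yielding the product formula for $\land$-nodes. The $\lor$-case follows from the same independence, since $\mathsf{m}_s^\gamma(A\cup B)=1-(1-\mathsf{m}_s^\gamma(A))(1-\mathsf{m}_s^\gamma(B))$ is monotone in each coordinate and therefore its $\sup$ equals $\coprod$ of the component suprema. The main subtlety is that separability must be used hereditarily: I rely on the fact that every subformula of a separable formula is again separable (Definition~\ref{def:isep}), so the independence argument recurses cleanly down the dependency graph.
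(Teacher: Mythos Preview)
Your proposal is correct and follows essentially the same approach as the paper: a case analysis on node type, with the and/or cases justified by independence (disjoint top-level action sets in factored form) and the action case by decomposing the scheduler into its first choice and residual sub-schedulers. Your write-up is in fact considerably more explicit than the paper's sketch; the one difference in emphasis is that the paper frames the action-node argument around the scheduler being able to \emph{deduce} the current subformula $\psi'_a$ from the partial computation (a point that really bears on the existence of $\epsilon$-optimal schedulers discussed immediately after the proof), whereas you argue the value equation directly by factoring the supremum---which is the cleaner route for the lemma as stated.
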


\begin{proof}
  Most of the cases are straightforward and similar to the GPL model
  checking algorithm~\cite[Lemma~8]{gpl} and a result for two-player
  stochastic parity games~\cite[Theorem~4.22]{indp}.  The
  \emph{and}-node and \emph{or}-node cases have the product and
  coproduct, respectively, due to independence.  We explain the
  \emph{action node} case in more detail.

  The sum over the probabilistic distribution is as in GPL and
  (\ref{eqn:gl22}); we explain the nondeterministic choice.  A PLTS
  scheduler makes a choice for an action given the partial computation
  $\sigma$.  Here, this choice is made based on a formula, $\psi'_a$,
  to be satisfied.  When the initial formula $\psi$ is separable, this
  is well-defined: given $L$, $s$, and $\psi$, the scheduler can
  deduce $\psi'_a$ from $\sigma$, a la traversal of the dependency
  graph.  
\end{proof}

We note that, although a particular choice may maximize
$\mathsf{Pr}_{L,s}(\psi')$, a scheduler that makes this choice
\emph{every time} is not necessarily optimal.  Indeed, no optimal
scheduler may exist, in which case we would only have
$\epsilon$-optimal schedulers for any $\epsilon > 0$~\cite{bmdp,
  indp}.  The probabilistic value may be predicated on making a
different choice \emph{eventually}.  The formulation in
Lemma~\ref{lem:prv} is consistent with this possibility, and the
existence of ($\epsilon$-)optimal schedulers may be justified through
a common method, called \emph{strategy improvement} or \emph{strategy
  stealing}~\cite{rmdp, indp}.  The intuition is that, in case of a
loop, we can add a choice to succeed immediately with the maximum
probability for the state.  This cannot increase the probability, and
the maximizing scheduler can otherwise be the same, if this choice
does not arise.

\begin{theorem}[Model checking termination]
  The graph construction of $\depgr(s, \psi)$ terminates for any XPL
  formula $\psi$ and PLTS $L$.  Moreover, if $\psi$ is separable, the
  XPL model checking algorithm will complete the construction of the
  dependency graph.
\end{theorem}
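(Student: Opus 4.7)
The plan is to prove termination and separable-success separately, with the main technical content being a finite bound on the formulae that can appear as second components of dependency-graph nodes. First, I would observe that the Fisher-Ladner closure $Cl(\psi)$ is finite, by the standard modal $\mu$-calculus argument: the only potentially unbounded production is fixed-point unrolling $\sigma X.\psi'' \mapsto \psi''[\sigma X.\psi''/X]$, but further unrollings are syntactically identical. Next I would argue that every node's formula component lies in a finite subset of $\mathcal{AO}(Cl(\psi))$: new formulae are produced only by (i) direct subformula extraction at an and/or node, which strictly shrinks the and-or tree; (ii) peeling a modality at an action node, yielding a subformula already in $Cl(\psi)$; or (iii) rewriting into factored form via $\groupmodalities \circ PE(s',\cdot) \circ \fpe$, a deterministic canonicalisation whose range lies in $\mathcal{AO}(Cl(\psi))$. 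Together with the standing assumption that $L$ is finitely branching at each state under each action, this bounds the reachable node set and yields termination, whether the construction succeeds or aborts in the failure clause.

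For separable $\psi$, I would prove by induction on the construction that every reached node $(s',\psi')$ has $\psi'$ separable and that, when $\psi'$ is not already in factored form, $\groupmodalities(PE(s',\fpe(\psi')))$ is a valid factored form. The base case is $(s,\psi)\in N$, which holds by hypothesis. For the inductive step, a new formula is either a direct subformula of a normalised separable formula (separable by clause~(1) of Definition~\ref{def:isep}), a subformula obtained by peeling a modality (again separable by clause~(1)), or the normalised form of an already-separable formula. Clause~(2) of Definition~\ref{def:isep} then says precisely that sibling leaves of the grouped normalised form have disjoint action sets, which is exactly the condition defining factored form, so the construction never aborts.

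The main obstacle I expect is verifying the separability invariant through $\fpe$, $PE(s,\cdot)$, and $\groupmodalities$, and through subformula extraction from the normalised result. This requires a small lemma: $\fpe$ does not introduce new top-level action-set overlaps (bound variables are guarded, so unrolling produces nothing at the top level of the and-or tree); $PE$ can only prune subformulae and hence cannot enlarge any action set; and $\groupmodalities$ merges only leaves that share a top-level action, a merger that separability forbids and that is therefore vacuous. One slightly subtle point is that the definition of separability uses $\probabs$ rather than $PE$, so I would need to observe that $PE$ differs from $\probabs$ only by additionally evaluating modal subformulae whose action is absent at $s'$; this can only contract the resulting action sets and thus preserves the disjointness guaranteed by separability. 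With this invariant in hand, both parts of the theorem follow cleanly from the finiteness argument above.
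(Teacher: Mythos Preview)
Your overall plan matches the paper's proof, which is extremely terse: finiteness of $Cl(\psi)$ (hence of the relevant DNF-quotient of $\mathcal{AO}(Cl(\psi))$) gives termination, and separability guarantees that every required factored form exists so the construction never aborts. Your inductive invariant that every reached node formula is separable is a reasonable way to make the second half precise, and is more explicit than what the paper actually writes.

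Two points in your final paragraph need repair. First, your claim that $\groupmodalities$ is ``vacuous'' on separable inputs is false: the conjunctive formula $\diam{a}\psi_1 \land \diam{a}\psi_2$ is separable (the paper remarks that all conjunctive formulae are), yet $\groupmodalities$ rewrites it non-trivially to $\diam{a}(\psi_1 \land \psi_2)$. Fortunately you do not need this claim: separability is \emph{defined} so that the normalised form $\groupmodalities(\probabs(\fpe(\psi')))$ already has pairwise-disjoint sibling action sets (and, by clause~(1) applied recursively to its unguarded subformulae, disjoint actions at all leaves), which is precisely the factored-form condition. There is nothing to push through the individual transformations. Second, your description of how $PE$ differs from $\probabs$ is incomplete: beyond evaluating absent-action modalities, $PE$ \emph{evaluates} non-probabilistic subformulae at the given state (possibly to $\mathsf{ff}$, collapsing an enclosing conjunction), whereas $\probabs$ merely \emph{removes} them. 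Your conclusion---that the $PE$-based normal form can only have smaller action sets than the $\probabs$-based one, so disjointness is preserved---remains correct, but the case analysis must cover this additional difference.
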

\begin{proof}
  $Cl(\psi)$ is finite, so $\mathcal{AO}(Cl(\psi))$ (for DNF versions
  used for equivalence checking) is finite.  The number of actions in
  $L$ and $\psi$ is finite, so the number of factored formulae is
  finite.  This is sufficient to guarantee termination, as we fail
  when we cannot construct a factored formula.  Meanwhile,
  separability of $\psi$ implies that we can construct a factored
  formula from any $\psi' \in \mathcal{AO}(Cl(\psi))$.  
\end{proof}

Our primary contribution is the completed dependency graph for a
separable formula $\psi$.  For model checking separable XPL formulae,
we show how, given the graph, to compare the probabilistic value of
$\psi$ at a state $s$ against a threshold $p$.  We do this by first
constructing a \emph{system of polynomial max fixed point equations}
from the graph.  Each node $i$ in the dependency graph is associated
with a real-valued variable $x_i$.  Given a set of variables $V$, each
equation in the system is of the form $x_i = e$ where $e$ is
\begin{itemize}
\item a polynomial over $V$ such that the sum of coefficients is $\leq
  1$; or

\item of the form $\max(V')$ where $V' \subseteq V$.
\end{itemize}
Furthermore, the equations form a stratified system, where each
variable $x_i$ can be assigned a stratum $j=\id{stratum}(x_i)$ such
that $x_i$ is defined in terms of only variables of the form $x_k$
such that $\id{stratum}(x_k) \leq \id{stratum}(x_i)$
(cf.~\cite[Def.~9]{tree15}); and variables in the same stratum $j$
fall under the same fixed point.

\begin{theorem}
Given a real value $p$, a system of polynomial max fixed point
equations and a distinguished variable $x$ defined in the system,
whether or not $x \bowtie p$ in its solution is decidable.
\end{theorem}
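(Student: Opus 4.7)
The plan is to reduce the decision problem to the decidability of the first-order theory of real closed fields (Tarski--Seidenberg). First I would verify that each equation's right-hand side defines a monotone function: polynomial right-hand sides arise (as seen in Lemma~\ref{lem:prv}) from products and coproducts of probabilistic values together with sums weighted by transition probabilities $P(s,a,s',c)$, so their coefficients are non-negative and sum to at most $1$, making them monotone non-decreasing maps $[0,1]^n \to [0,1]$; the $\max$ operator is trivially monotone. Consequently, at every stratum $j$ the equations fix a monotone self-map $F_j$ of $[0,1]^{n_j}$ (parametrized by the values of strictly lower strata), so by Tarski's lattice fixed-point theorem the required least or greatest fixed point exists and is unique.

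Next I would express the solution as a first-order formula in the language of ordered rings. For a single stratum of arity $k$ interpreted as a least fixed point, the vector $\vec v = \mathrm{lfp}(F_j(\cdot,\vec u))$ is characterized by
\begin{equation*}
  F_j(\vec v,\vec u) = \vec v \ \land\ \forall \vec w\, \bigl(F_j(\vec w,\vec u) \le \vec w \;\Rightarrow\; \vec v \le \vec w\bigr),
\end{equation*}
with componentwise $\le$, and dually for greatest fixed points using $\ge$. The $\max$ construct is expressible by a disjunction of polynomial equalities with inequalities. Each such characterization is a first-order formula in $(+,\cdot,\le,0,1)$ with the lower-stratum variables as parameters. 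Proceeding by induction on the stratum ordering, I would compose these formulas, introducing one extra block of quantifiers per stratum, until the distinguished variable $x$ is expressed as $\Phi(x)$, a formula defining the unique value $x$ takes in the overall solution. The truth of the sentence $\exists x.\,\Phi(x) \land x \bowtie p$ then answers the problem.

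Finally, by the Tarski--Seidenberg theorem, the first-order theory of the real ordered field admits quantifier elimination and is decidable, and this decision procedure extends to inputs involving rational (or algebraic) constants; since $p$ and every coefficient appearing in the equation system are rational, the resulting sentence is effectively constructible and its truth is decidable.

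The main obstacle is handling the stratification cleanly: one must ensure the inductive characterization of each stratum's fixed point remains a valid first-order formula over the reals after substituting the (implicitly quantified) lower-stratum values, and that alternating least/greatest fixed points across strata do not collapse into a mutual recursion that defeats the stratified expression. This is precisely what the stratification condition of \cite{tree15}, imported into our definition, buys us: each stratum's fixed-point quantifier scopes only over variables in that stratum, so the formula size and quantifier depth grow linearly with the number of strata, and no fixed-point alternation inside a stratum has to be simulated.
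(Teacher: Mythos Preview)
Your proposal is correct and follows essentially the same route as the paper: encode the equation system, the $\max$ operator, and the (stratified) least/greatest fixed points as a sentence in the first-order theory of real closed fields, then invoke Tarski's decidability result. The only notable difference is that you characterize the least fixed point via pre-fixed points ($F(\vec w)\le\vec w$), whereas the paper quantifies over fixed points ($\vec x' = P(\vec x')$); both are valid for monotone operators, and your added justification of monotonicity (hence existence of the extremal fixed points) is a welcome detail the paper leaves implicit.
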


\begin{proof}
  We write the max polynomial system, $\mathbf{x} = P(\mathbf{x})$, as
  a sentence in the first-order theory of real closed fields, similar
  to~\cite{tree15}.  The additional comparison will be $x_0 \bowtie
  p$.  Along with the equation system, we need to encode fixed points
  and $\max$.

  We can encode $x_i = \max(x_j, x_k)$ as
  (\ref{eqn:fomax})~(cf.~\cite[Section~5]{rmdp}):
  \begin{equation}\label{eqn:fomax}
    x_i \geq x_j \land x_i \geq x_k \land (x_i \leq x_j \lor x_i \leq
    x_k) \enspace .
  \end{equation}
  Meanwhile, letting $V$ be the set of all variables and $I$ a subset
  belonging to some stratum with least fixed point, we can encode the
  fixed point itself as (\ref{eqn:folfp}):
  \begin{equation}\label{eqn:folfp}
    \forall \mathbf{x}'_I . \left(\bigwedge_{i \in I} x'_i =
    P_i(\mathbf{x}'_I, \mathbf{x}_{V \setminus I}) \implies
    \bigwedge_{i \in I} x_i \leq x'_i \right) \enspace .
  \end{equation}
  The stratification of fixed points in the equation system precludes
  a cyclical dependency between a least and a greatest fixed point; a
  greatest fixed point can be encoded similarly.

  The original fixed point equation system, along with the query $x
  \bowtie p$, (\ref{eqn:fomax})-(\ref{eqn:folfp}), and the counterpart
  encoding greatest fixed point, are sentences in a first order theory
  of real closed fields, which is decidable~\cite{tarski}.  Hence the
  decidability of $x \bowtie p$ in the solution to the fixed point
  equations follows. 
\end{proof}

We use the above result to determine whether or not
$\mathsf{Pr}_{L,s}(\psi) \bowtie p$ for a separable XPL formula
$\psi$.  The polynomial fixed point system is derived similarly
to~\cite[Section~4.1.2]{gpl}, with a variable $x_{(s,\psi)}$ for each
node $(s,\psi)$ in the dependency graph $\depgr(s, \psi)$, and
equations based on Lemma~\ref{lem:prv}.
\begin{itemize}
\item If $\psi$ is not in factored form, then $(s,\psi)$ has a unique
  edge labeled by $\varepsilon$ to a node $(s,\psi')$, and
  $x_{(s,\psi)} = x_{(s,\psi')}$.

\item $x_{(s,\mathsf{ff})} = 0$ and $x_{(s,\mathsf{tt})} = 1$.

\item If $(s,\psi)$ is an \emph{and}-node, then $x_{(s,\psi)} =
  \prod\limits_{((s,\psi), \varepsilon^{\land}, (s,\psi_i)) \in E}
  x_{(s,\psi_i)}$.

\item If $(s,\psi)$ is an \emph{or}-node, then $x_{(s,\psi)} =
  \coprod\limits_{((s,\psi), \varepsilon^{\lor}, (s,\psi_i)) \in E}
  x_{(s,\psi_i)}$.

\item If $(s,\psi)$ is an action node and $\psi = \diam{a} \psi_a$,
  then \\ $x_{(s,\psi)} = \max\limits_{c \in \bbbn} \sum\limits_{((s,\psi),
    a, (s',\psi_a)) \in E} P(s, a, s', c) \cdot x_{(s',\psi_a)}$.
\end{itemize}

\begin{theorem}[Correctness]
  The construction of the dependency graph $\depgr(s, \psi)$, when
  $\psi$ is separable, yields a polynomial max fixed point equation
  system, such that the value of $x_{(s,\psi)}$ in its solution is
  $\mathsf{Pr}_{L,s}(\psi)$.
\end{theorem}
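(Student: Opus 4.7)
The plan is to establish correctness in three stages: (i) verify that the produced equation system is a well-formed polynomial max fixed point system in the sense required by the previous theorem, (ii) show that the tuple $\big(\mathsf{Pr}_{L,s'}(\psi')\big)_{(s',\psi') \in N}$ satisfies the equations, and (iii) identify this tuple as the solution selected by the stratified least/greatest fixed point scheme, so that $x_{(s,\psi)} = \mathsf{Pr}_{L,s}(\psi)$.

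For stage~(i) I would check each equation template. The equations for $\mathsf{tt}$, $\mathsf{ff}$, $\varepsilon$-edges, \emph{and}-nodes ($\prod$), and \emph{or}-nodes ($1-\prod(1-x_i)$) are polynomials whose coefficients obviously sum to at most $1$; for action nodes the inner sum has coefficients $P(s,a,s',c)$ that sum to $1$ by Definition~\ref{def:nlts}, and wrapping the sum in $\max_c$ gives an equation of the allowed max form. Stratification follows from the alternation-free restriction on fuzzy formulae: I would define $\id{stratum}(x_{(s',\psi')})$ from the nesting depth of fixed-point operators whose expansion produced $\psi'$ in $\mathcal{AO}(Cl(\psi))$; because greatest and least fixed points cannot be interleaved in a single connected component, variables sharing a stratum all lie under a common $\mu$ or $\nu$.

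For stage~(ii) I would use Lemma~\ref{lem:prv} node by node. The non-trivial cases are the \emph{and}- and \emph{or}-nodes, where the product/coproduct formulae presuppose that the events $\Theta_{L,s'}(\psi'_1)$ and $\Theta_{L,s'}(\psi'_2)$ are independent under the (same) optimal scheduler. This is exactly what separability buys: because $\psi' = \groupmodalities(\probabs(\fpe(\psi')))$ is in factored form with $\mathsf{action}(\psi'_1) \cap \mathsf{action}(\psi'_2) = \emptyset$ at every relevant node, a maximizing scheduler's choices for the actions in $\psi'_1$ are unconstrained by $\psi'_2$ and vice versa; hence one can build a combined scheduler achieving the supremum on both sides and factor the measure. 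I would spell this out by exhibiting, for any $\epsilon$-optimal schedulers $\gamma_1$ and $\gamma_2$ for $\psi'_1$ and $\psi'_2$, a composite scheduler agreeing with $\gamma_i$ on histories exercising $\mathsf{action}(\psi'_i)$; because the action sets are disjoint, this composite is well-defined and maximizes both sides simultaneously. The action-node case reuses equation~(\ref{eqn:gl22}) together with the observation that the scheduler's choice reduces to selecting a $c \in \bbbn$.

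Stage~(iii) is where I expect the main obstacle, and it mirrors the corresponding argument for GPL but with max added. I would show, by induction on stratum, that the semantic approximants $M_i$ (for $\mu$) and $N_i$ (for $\nu$) from Table~\ref{tab:gff} correspond to the Kleene iterates of the polynomial operator on each stratum: monotonicity of $\prod$, $\coprod$, $\sum$, and $\max$ in the positive orthant $[0,1]^{|N|}$ lets the fixed point of each stratum be computed by successive approximation, and the resulting value coincides with $\mathsf{Pr}_{L,s'}(\psi')$ from stage~(ii). The subtlety, handled by the strategy-stealing remark following Lemma~\ref{lem:prv}, is that no optimal scheduler need exist; I would therefore argue equality of the probabilistic value and the fixed-point solution via an $\epsilon$-argument, taking $\epsilon$-optimal schedulers for each stratum, observing that their measures realize the iterates up to $\epsilon$, and passing to the limit. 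Combining stages~(i)--(iii) gives $x_{(s,\psi)} = \mathsf{Pr}_{L,s}(\psi)$ in the solution, completing the proof.
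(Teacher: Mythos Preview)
Your proposal is correct and follows essentially the same route as the paper: invoke Lemma~\ref{lem:prv} to show that the tuple of probabilistic values satisfies the equations, and then appeal to the fixed-point semantics to identify that tuple with the stratified solution. The paper's own proof is a two-line pointer to Lemma~\ref{lem:prv} and Equation~(\ref{eqn:folfp}); your stages~(i)--(iii) simply unpack that pointer in more detail, with stage~(ii) re-deriving the independence argument already contained in the proof of Lemma~\ref{lem:prv}.
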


\begin{proof}
  The correctness result follows from Lemma~\ref{lem:prv} and the
  semantics of fixed points given by Equation~\ref{eqn:folfp} (and its
  counterpart). 
\end{proof}

Consequently, we have:

\begin{corollary}[Decidability]
  Given a state formula $\varphi$ with separable subformulae, a PLTS
  $L$ and a state $s$ in $L$, whether or not $s \models_L \varphi$ is
  decidable.
\end{corollary}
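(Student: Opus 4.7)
The plan is to proceed by structural induction on the state formula $\varphi$, reducing each case either to the inductive hypothesis or to the machinery already established in the preceding Correctness and decidability theorems. For the base cases $\varphi = A$ and $\varphi = \neg A$, deciding $s \models_L \varphi$ reduces to checking whether $A \in I(s)$, which is immediate from the PLTS's interpretation function. The Boolean cases $\varphi = \varphi_1 \land \varphi_2$ and $\varphi = \varphi_1 \lor \varphi_2$ follow directly from the induction hypothesis: the fuzzy subformulae of $\varphi_i$ are among the fuzzy subformulae of $\varphi$, so each $\varphi_i$ is again a state formula with separable subformulae and the Boolean combination of two decidable predicates is decidable.

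The one interesting case is $\varphi = \mathsf{Pr}_{\bowtie p} \psi$, where by hypothesis $\psi$ is separable. Here I would invoke three previous results in sequence. First, by the Model Checking Termination theorem, because $\psi$ is separable the construction of $\depgr(s, \psi)$ completes successfully, producing a finite dependency graph. Second, by the Correctness theorem, this graph yields a polynomial max fixed point equation system in which the distinguished variable $x_{(s,\psi)}$ takes, in its solution, the value $\mathsf{Pr}_{L,s}(\psi) = \sup_{\gamma} \mathsf{m}_s^\gamma(\Theta_{L,s}(\psi))$. Third, by the preceding theorem on polynomial max fixed point equations, for the given real $p$ we can decide whether $x_{(s,\psi)} \bowtie p$ in that solution. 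Reading off Table~\ref{tab:esf}, this is exactly the decision $s \models_L \mathsf{Pr}_{\bowtie p} \psi$, completing the inductive step.

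I do not expect a substantial obstacle; the corollary is essentially an assembly of previously established pieces. The only point that requires care is to observe that the property ``has separable fuzzy subformulae'' is inherited by immediate state-formula subformulae, so that the induction hypothesis is applicable in the Boolean cases. This is immediate from the definition, since every fuzzy subformula of $\varphi_i$ is also a fuzzy subformula of $\varphi_1 \oplus \varphi_2$, and a fortiori of $\varphi$ itself.
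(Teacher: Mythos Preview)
Your proposal is correct, and indeed the paper gives no explicit proof of this corollary at all: it simply writes ``Consequently, we have:'' after the Correctness theorem and states the result. Your structural induction is the natural way to make the implicit argument precise, and it lines up with the paper's intent.

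One small point you may wish to tighten: in the $\mathsf{Pr}_{\bowtie p}\psi$ case, the dependency-graph construction uses the partial-evaluation step $PE(s,\cdot)$, which must evaluate any \emph{state} subformulae nested inside $\psi$ (recall that $\Theta_L$ and $\models_L$ are mutually recursive). These nested state formulae are structurally smaller than $\varphi$ and inherit the separability hypothesis, so they too are covered by your induction hypothesis---but you currently invoke the hypothesis only for the Boolean cases. Making this explicit closes the loop on the mutual recursion.
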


\medskip

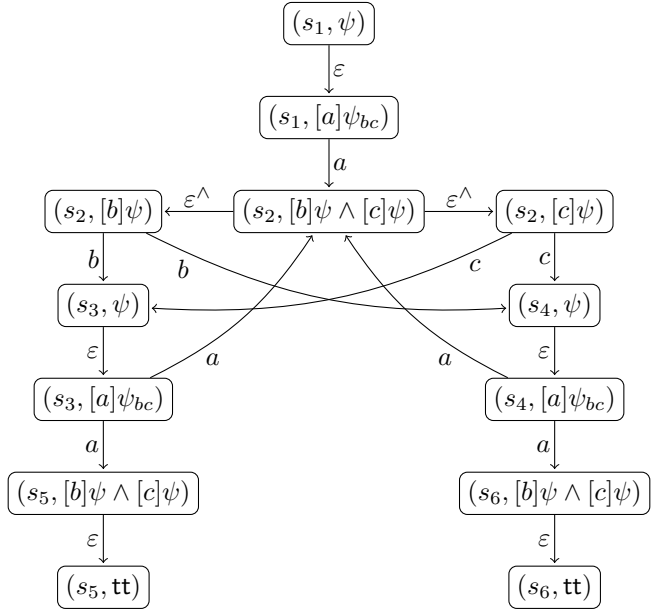
\begin{figure}
  \centering
  \begin{adjustbox}{scale=1}
      \begin{tikzpicture}[shorten >=1pt,%
        inner sep = 0.5mm, %
        node distance=1.25cm,%
        every state/.style={rectangle, rounded corners=3pt, inner sep=3pt, minimum
          size=2mm}, %
        on grid,auto,%
        initial text=,%
        highlight/.style={very thick, draw=red, text=red},%
        bend angle=15]
        \node[state] (s1p) {$(s_1,\psi)$}; 
        \node[state] (s1apbc) [below=of s1p] {$(s_1,[a]\psi_{bc})$}; 
        \node[state] (s2bcp) [below=of s1apbc] {$(s_2,[b]\psi\land [c]\psi)$}; 
        \node[state] (s2bp) [left= 3cm of s2bcp] {$(s_2,[b]\psi)$}; 
        \node[state] (s2cp) [right=3cm of s2bcp] {$(s_2,[c]\psi)$}; 
        \node[state] (s3p) [below=of s2bp] {$(s_3,\psi)$}; 
        \node[state] (s4p) [below=of s2cp] {$(s_4,\psi)$}; 
        \node[state] (s3apbc) [below=of s3p] {$(s_3,[a]\psi_{bc})$}; 
        \node[state] (s4apbc) [below=of s4p] {$(s_4,[a]\psi_{bc})$}; 
        \node[state] (s5bcp) [below=of s3apbc] {$(s_5,[b]\psi\land [c]\psi)$}; 
        \node[state] (s6bcp) [below=of s4apbc] {$(s_6,[b]\psi\land [c]\psi)$}; 
        \node[state] (s5tt) [below=of s5bcp] {$(s_5,\mathsf{tt})$}; 
        \node[state] (s6tt) [below=of s6bcp] {$(s_6,\mathsf{tt})$}; 

        \path[->] 
        (s1p) edge node {$\varepsilon$} (s1apbc) 
        (s1apbc) edge node {$a$} (s2bcp) 
        (s2bcp) edge [swap] node {$\varepsilon^{\land}$} (s2bp) 
        (s2bcp) edge  node {$\varepsilon^{\land}$} (s2cp) 
        (s2bp) edge [swap] node {$b$} (s3p) 
        (s2bp) edge [bend right, swap, very near start] node {$b$} (s4p) 
        (s4p) edge [swap] node {$\varepsilon$} (s4apbc) 
        (s3apbc) edge [bend right, swap, near start] node {$a$} (s2bcp) 
        (s3apbc) edge [swap] node {$a$} (s5bcp) 
        (s5bcp) edge [swap] node {$\varepsilon$} (s5tt) 
        (s2cp) edge [swap] node {$c$} (s4p) 
        (s2cp) edge [bend left, very near start] node {$c$} (s3p) 
        (s3p) edge [swap] node {$\varepsilon$} (s3apbc) 
        (s4apbc) edge [bend left, near start] node {$a$} (s2bcp) 
        (s4apbc) edge [swap] node {$a$} (s6bcp) 
        (s6bcp) edge [swap] node {$\varepsilon$} (s6tt);
      \end{tikzpicture}
  \end{adjustbox}
  \caption{XPL Model Checking Example: Dependency Graph}
  \label{fig:xplp}
\end{figure}
\begin{example}[Model Checking]\label{ex:modchk}
  For the PLTS $L$ in Fig.~\ref{fig:plts} and fuzzy formula $\psi =
  \mu X.[a][b]X \land [a][c]X$, we have symmetric nondeterministic
  choices on $b$ and $c$ from state $s_2$, and the formula is
  satisfied by all finite d-trees (since both $s_3$ and $s_4$ have a
  probability greater than $\frac 1 2$ of returning to $s_2$, the
  infinite d-trees have positive measure on any scheduler).  Letting
  $\psi_{bc} = [b]\psi \land [c]\psi$, we get the dependency graph shown
  in Fig.~\ref{fig:xplp}.

 We find $\mathsf{Pr}_{L,s_1}(\psi)$ as the value of $x^a_1$ in
 the least fixed point from
 the following equations:
 \begin{equation}
   \begin{array}{ll}
     x^a_1 = x^{bc}_2 & x^b_2 = \max(x^a_3, x^a_4) \\

     x^{bc}_2 = x^b_2 \cdot x^c_2 & x^c_2 = \max(x^a_3, x^a_4) \\

     x^a_3 = {\frac 1 3} x^{bc}_5 + {\frac 2 3} x^{bc}_2 \quad ~ &
     x^{bc}_5 = 1 \\

     x^a_4 = {\frac 1 4} x^{bc}_6 + {\frac 3 4} x^{bc}_2 & x^{bc}_6 =
     1
   \end{array}
 \end{equation}
 \noindent
 Solving the equations, we get $\mathsf{Pr}_{L,s_1}(\psi) =
 x^a_1 = \frac 1 4$.
\end{example}

Note that the model checking algorithm can be broken into the
following two parts: writing down a polynomial system, and then
finding the (approximate) solution.  The first part is bounded
double-exponentially in the size of the fuzzy formula, as we deal with
(for complexity purposes on equivalence checking) DNF formulae from
the Fisher-Ladner closure.  For the second part, value iteration is
guaranteed to converge, when computing a single least or greatest
fixed point~\cite{bmdp}, but may be exponentially slow in the number
of digits of precision~\cite{msp}.  For polynomial systems arising
from conjunctive formulae, alternative approximation methods have been
proven to be efficient~\cite{newton}.

\section{Encoding Other Model Checking Problems}\label{sec:encex}
\subsection{Model Checking PCTL* over MDPs}\label{sec:pctls}

PCTL* is a widely used and well-known logic for specifying properties
over Markov chains and MDPs.  The syntax of PCTL* may be given as
follows, where $A \in Prop$ and $\phi$ and $\psi$ represent state
formulae and path formulae, respectively:
\begin{equation*}
  \begin{array}{l}
    \phi ::= A \mid \phi \land \phi \mid \neg \phi \mid
    \mathsf{Pr}_{> p} \psi \mid \mathsf{Pr}_{\geq p} \psi \\

    \psi ::= \phi \mid \mathsf{X} \psi \mid \psi \mathsf{U} \psi \mid
    \psi \land \psi \mid \neg \psi
  \end{array}
\end{equation*}
This is similar to the syntax given by~\cite[Chapter~$9$]{PCTL*},
except omitting the bounded \emph{until} operator.

\begin{table}
  \caption{Encoding of PCTL* over MDPs}\label{tab:pctl}
  \begin{equation*}
    E_{PCTL^*}(\gamma) = \left\{
    \begin{array}{ll}
      \hline
      \gamma, & \gamma \in Prop, \\

      \mathsf{neg}(E_{PCTL^*}(\gamma')), & \gamma = \neg \gamma', \\

      E_{PCTL^*}(\gamma_1) \land E_{PCTL^*}(\gamma_2), & \gamma =
      \gamma_1 \land \gamma_2, \\

      \mathsf{Pr}_{> p} E_{PCTL^*}(\psi), & \gamma =
      \mathsf{Pr}_{> p} \psi, \\

      \mathsf{Pr}_{\geq p} E_{PCTL^*}(\psi), & \gamma =
      \mathsf{Pr}_{\geq p} \psi, \\

      \diam{a} E_{PCTL^*}(\psi), & \gamma = \mathsf{X} \psi, \\

      \mu X.E_{PCTL^*}(\psi_2) \lor (E_{PCTL^*}(\psi_1) \land \diam{a}
      X), & \gamma = \psi_1 \mathsf{U} \psi_2. \\ \hline
    \end{array}
    \right.
  \end{equation*}
\end{table}

In \cite[Sect.~3.2]{gpl}, PCTL* model checking over Markov chains was
encoded in terms of GPL model checking over RPLTSs.  First of all,
Markov chains are represented as RPLTSs with one action label ($Act =
\{a\}$).  Due to this, all d-trees are paths representing runs in the
Markov chain.  Thus the linear-time semantics of PCTL* carries over,
since all the ``trees'' degenerate to paths.  The GPL encoding of
PCTL* then relies on the following three basic steps:
\begin{enumerate}
\item Next state operator $\mathsf{X}$ is encoded in terms of the
  diamond modality $\diam{a}$ in GPL.
\item Until formulae $\psi_1 \mathsf{U} \psi_2$ are encoded by
  unrolling them as $\psi_2 \lor (\psi_1 \land \mathsf{X}(\psi_1
  \mathsf{U} \psi_2))$ and using a least fixed point GPL formula to
  represent the unrolling.
\item Formulae with negation of the form  $\neg \psi$ are encoded by
  finding negating the encoding of $\psi$.  
\end{enumerate}
Other operators including PCTL* path quantifiers have corresponding
operators in GPL, and are translated directly.

The above encoding has a significant limitation: although RPLTSs in
general can exhibit both nondeterministic and probabilistic choices,
the GPL-based encoding was only for model checking PCTL* only over
Markov chains, and not over MDPs.  This is because the nondeterminism
in MDPs has linear-time semantics, while all nondeterminism in GPL is
under the branching-time semantics.  Indeed, this nondeterminism is
entirely unused in the above encoding by limiting RPLTSs to one action
label ($Act = \{a\}$), which, in turn made every d-tree into a path.

As XPL semantics for fuzzy formulae is also defined over d-trees, the
PCTL* encoding of \cite[Sect.~3.2]{gpl} carries over to XPL
essentially unchanged.  For model checking MDPs, we represent them as
PLTSs, treating the internal nondeterminism among the actions in an
MDP as internal nondeterminism in a PLTS as well.  Note that d-trees
of a PLTS obtained from an MDP are still paths, since branching in the
d-trees only represents external nondeterminism.  Consequently, the
addition of internal nondeterminism in PLTSs, interpreted under the
linear-time semantics, is orthogonal to the problem of encoding PCTL*,
because the internal nondeterminism is resolved by the time we reach
d-trees.  Model checking of a PCTL* formula $\gamma$ over an MDP is
cast as XPL model checking of the corresponding PLTS, where the XPL
formula is generated by $E_{PCTL^*}(\gamma)$ defined in
Table~\ref{tab:pctl}.  In the definition, ``$\mathsf{neg}(\psi)$''
represents the negation of an XPL formula, also expressed in XPL.
Note that, as stated earlier, the translation of PCTL* formulae to XPL
formulae is virtually identical to the translation to
GPL~\cite[Sect~3.2]{gpl}.  The novelty is that we have identified the
XPL formulae resulting from our translation as separable, and hence
PCTL* properties can be successfully model checked over MDPs with our
XPL model checking algorithm.

\subsection{Encoding of RMDP Termination}\label{sec:rmdp}
We consider recursive MDPs (RMDPs)~\cite{rmdp} as a nondeterministic
extension of Recursive Markov Chains (RMCs)~\cite{RMC}.  We discuss a
more general model, called \emph{recursive simple stochastic games}
(RSSGs); formally, an RSSG $A$ is a tuple $(A_1, \dots, A_k)$, where
each \emph{component graph} $A_i$ is a septuple $(N_i, B_i, Y_i,
\kw{En}_i, \kw{Ex}_i, \kw{pl}_i, \delta_i)$:
\begin{itemize}
\item $N_i$ is a set of nodes, containing subsets $\kw{En}_i$ and
  $\kw{Ex}_i$ of entry and exit nodes, respectively.

\item $B_i$ is a set of boxes, with a mapping $Y_i \fcn{B_i} \{1,
  \dots, k\}$ assigning each box to a component.  Each box has a set
  of call and return ports, corresponding to the entry and exit nodes,
  respectively, in the corresponding components: $\kw{Call}_b =
  \{(b,en) \mid en \in \kw{En}_{Y_i(b)}\}$, $\kw{Return}_b = \{(b,ex)
  \mid ex \in \kw{Ex}_{Y_i(b)}\}$.  Additionally, we have:
  \begin{align*}
    \kw{Call}^i &= \bigcup_{b \in B_i} \kw{Call}_b, \\ \kw{Return}^i
    &= \bigcup_{b \in B_i} \kw{Return}_b, \\ Q_i &= N_i \cup
    \kw{Call}^i \cup \kw{Return}^i.
  \end{align*}

\item $\kw{pl}_i \fcn{Q_i} \{0, 1, 2\}$ is a mapping that specifies
  whether, at each state, the choice is probabilistic (i.e.,
  \emph{player} $0$), or nondeterministic (player $1$: maximizing,
  player $2$: minimizing).  As any $u \in \kw{Call}^i \cup \kw{Ex}_i$
  has no outgoing transitions, let $\kw{pl}_i(u) = 0$ for these
  states.

\item $\delta_i$ is the transition relation, with transitions of the
  form $(u, p_{uv}, v)$, when $\kw{pl}_i(u) = 0$ and $u$ is not an
  exit node or a call port, and $v$ may not be an entry node or a
  return port.  Additionally, $p_{uv} \in (0,1]$ and, for each $u$,
    $\sum\limits_{v':(u, \cdot, v') \in \delta_i} p_{uv'} = 1$.
    Meanwhile, the nondeterministic extension yields transitions of
    the form $(u, \bot, v)$ when $\kw{pl}_i(u) > 0$.
\end{itemize}

Recursive MDPs (RMDPs) only have a player $1$ or player $2$, depending
on whether they are maximizing or minimizing, respectively.
Termination probabilities can be computed for $1$-RSSGs, and are
always achieved, for both players, with a strategy limited to a class
called \emph{stackless} and memoryless (SM)~\cite{rmdp}.  The essence
of SM strategies is that in each nondeterministic choice, the
selection is fixed to a single state from its distribution, which
makes the resolution of the nondeterministic choices substantially
simpler than in the general case.  For multi-exit RSSGs, the
termination probability is \emph{determined}~\cite{rmdp}, although an
optimal strategy may not exist, and the problem of computing the
probability is undecidable, in general.  SM strategies are inadequate
even for $2$-exit RMDPs~\cite{rmdp}.  Figure~\ref{fig:rmdp} shows a
recursive MDP with two components, $A$ and $B$.  Any call to $A$
nondeterministically results in either a call to $B$ (via box $b_1$)
or a transition to $u$.

\subsubsection{Translating RMDPs to PLTSs}\label{sec:trans}
Given an RMDP $A$, we can define a PLTS $L$ that simulates $A$, with
$\kw{Act} = \{p, n, c, r_i, e_i\}$ and states of the PLTS
corresponding to nodes of the RMDP.  We retain the RMDPs transitions,
labeling them as $n$ for actions from a nondeterministic choice and
$p$ for probabilistic choice.  To this basic structure we add three
new kinds of edges:
\begin{itemize}
\item $e_i$ for the $i$th exit node of a component,

\item $c$ edges from a call port to the called component's entry node,
  and

\item $r_i$ edges from a call port to each return port in the box.
\end{itemize}
While $c$ edges denote control transfer due to a call, $r$ edges
summarize returns from the called procedure.  Figure~\ref{fig:rmdp}
shows the result of the translation for one component of the RMDP.
Formally, we define the PLTS $L$ as follows:
\begin{definition}[Translated RMDP]\label{def:trmc}
  The translated RMDP $A$ is a PLTS $L = (S, \delta, P, I)$:
  \begin{itemize}
  \item The set of states $S$ is the set of all the nodes, as well as
    the call and return ports of the boxes, i.e., $S = \bigcup_i Q_i$.
    Additionally, we associate a consistent index with each state
    corresponding to an exit node or a return port.

  \item The transition relation $\delta$ has all the transitions of
    the components, labeled by action $p$ for the probabilistic
    transitions and $n$ for the nondeterministic ones.  Thus, when
    $(u, p_{uv}, v) \in \delta_i$ for any $i$, then $(u, p, v) \in
    \delta$, and when $(u, \bot, v) \in \delta_i$, $(u, n, v) \in
    \delta$. Additionally, we have $((b,en), c, en) \in \delta$ and
    $((b,en), r_i, (b,ex_i)) \in \delta$ for every box $b$, and
    $(ex_i, e_i, ex_i) \in \delta$ for every exit node.  Note the
    indices used.

  \item The transition probability distribution $P$ is defined as
    $P(u, p, v, \cdot) = p_{uv}$ as given for the RMDP $A$, $P(u, n,
    v, c(v)) = 1$, where $c \fcn{S} \mathbb{N}$ is a one-to-one
    function (when $c \neq c(v)$ for any $v$ with $(u,n,v) \in
    \delta$, $P(u,n,v,c) = 1$ for an arbitrary $v$ with $(u,n,v) \in
    \delta$), and $P(\cdot) = 1$ if the action is not $p$ or $n$.

  \item We do not use the interpretation in the translation, i.e.,
    $I(s) = \emptyset$ for any state $s$, unless additional relevant
    information about the RMDP $A$ is available.
  \end{itemize}
\end{definition}
For RMCs, the translation yields a simulating RPLTS $L$ (no $n$
actions).

Intuitively, $L$ preserves all the non-recursive transition structure
of $A$ via the actions labeled by $p$ and $n$.  There are additional
$c$ actions to model call transitions.  Note that each call port will
have a single outgoing $c$ transition, while the entry nodes may have
multiple incoming $c$ transitions.  Meanwhile, we need a different
design to associate exit nodes with return ports, as an exit node may
be associated with multiple return ports.  Thus, we have indexed $e$
and $r$ actions and require a standard formula to model termination.
We note that the resulting structure is similar to the nested state
machines (NSM)~\cite{nsm}, with the $p/n$, $c$, $r_i$, and $e_i$ edges
corresponding to the \kw{loc} (local), \kw{call}, \emph{jump}, and
\kw{ret} edges, respectively, in the NSM model.
\begin{figure}
  \centering\includegraphics[scale=.6]{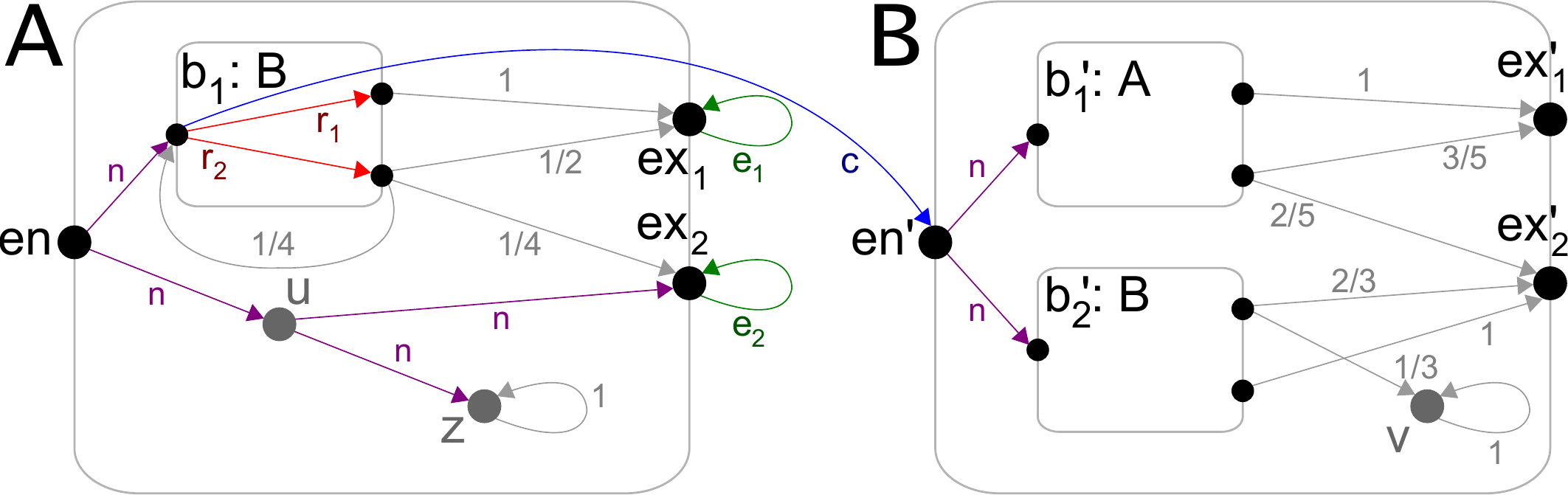}
  \caption{Example RMDP with Call, Return, and Exit edges added to
    $A$}\label{fig:rmdp}
\end{figure}

Termination of $1$-RMDPs can be encoded as the following separable
formula:
\begin{equation}\label{eqn:sermc}
  \psi_1 = \mu X.\diam{e_1} \mathsf{tt} \lor \diam{p} X \lor \diam{n}
  X \lor (\diam{c} X \land \diam{r_1} X)
\end{equation}

Termination of multi-exit RMDPs is undecidable, in
general~\cite{rmdp}.  We can still encode it in XPL, but the resulting
formula is not separable: the termination formula for a $2$-exit
RMDP~(\ref{eqn:mermc}) is entangled on the $c$ action.
\begin{align}\label{eqn:mermc}
    \psi_2^1 =_\mu \diam{e_1} \mathsf{tt} \lor \diam{p} \psi_2^1 \lor
    \diam{n} \psi_2^1 &\lor (\diam{c} \psi_2^1 \land \diam{r_1}
    \psi_2^1) \\ \nonumber &\lor (\diam{c} \psi_2^2 \land \diam{r_2}
    \psi_2^1) \\ \nonumber
    \psi_2^2 =_\mu \diam{e_2} \mathsf{tt} \lor \diam{p} \psi_2^2 \lor
    \diam{n} \psi_2^2 &\lor (\diam{c} \psi_2^1 \land \diam{r_1}
    \psi_2^2) \\ \nonumber &\lor (\diam{c} \psi_2^2 \land \diam{r_2}
    \psi_2^2)
\end{align}

We note that the conjunction $\diam{c}X \land \diam{r_1}X$
in~(\ref{eqn:sermc}) is independent.  Additionally, the disjunction
between the two conjuncts in~(\ref{eqn:mermc}) is mutually exclusive,
since $\diam{c} \psi_1^2$ and $\diam{c} \psi_2^2$ correspond to
eventually reaching distinct exits; however, it is correct to sum them
only for RMCs, as the nondeterministic choices in RMDPs preclude the
simple summation of mutually exclusive outcomes.

\subsection{PTTL and Branching Processes}\label{sec:pttl}
PCTL*~\cite{PCTL*} may be considered a linear-time logic, in the sense
that its fuzzy formulae are essentially full LTL.  Similarly,
PCTL~\cite{PCTL} is not the only plausible extension of CTL: instead
of replacing the $\mathsf{A}$ and $\mathsf{E}$ operators with the
$\mathsf{Pr}$ operators, we could have full CTL as fuzzy formulae, as
there is a natural interpretation of CTL over d-trees, and this logic,
over RPLTS, would be subsumed by GPL.

A similar logic, Probabilistic Tree Temporal Logic (PTTL), has been
independently introduced~\cite{bpmc}.  Branching Processes (BPs) are a
branching-time extension of Markov chains, and PTTL is a logic over
BPs.  The problem of BP extinction corresponds to termination of
$1$-exit RMCs~\cite{RMC}.  BPs have also been extended with
nondeterminism, yielding Branching MDPs (BMDPs), for which the
extinction and reachability problems have been analyzed~\cite{bmdp,
  rmdp}.

We write the syntax of PTTL~\cite[Definition~$18$]{bpmc}, where $A \in
Prop$, and we refer to $\phi$ and $\psi$ as state and fuzzy formulae,
as for XPL:
\begin{equation*}
  \begin{array}{l}
    \phi ::= A \mid \neg \phi \mid \phi \land \phi \mid \mathsf{Pr}_{>
      p} \psi \mid \mathsf{Pr}_{\geq p} \psi \\

    \psi ::= \mathsf{AX} \phi \mid \mathsf{EX} \phi \mid \mathsf{A}
         [\phi \mathsf{U} \phi] \mid \mathsf{E} [\phi \mathsf{U} \phi]
         \mid \mathsf{A} [\phi \mathsf{R} \phi] \mid \mathsf{E} [\phi
           \mathsf{R} \phi]
  \end{array}
\end{equation*}

In this section, we may view BPs as specialized RPLTSs, and RMDPs as
specialized PLTSs.  So, we give the semantics for PTTL over PLTSs
(without terminal states), assuming maximizing schedulers, by encoding
it in XPL, as $E_{PTTL}(\gamma)$, in Table~\ref{tab:pttl}.
\begin{table}
  \caption{Encoding of PTTL over BMDPs}\label{tab:pttl}
  \begin{equation*}
    E_{PTTL}(\gamma) = \left\{
    \begin{array}{ll}
      \hline

      \gamma, & \gamma \in Prop, \\

      \mathsf{neg}(E_{PTTL}(\gamma')), & \gamma = \neg \gamma', \\

      E_{PTTL}(\gamma_1) \land E_{PTTL}(\gamma_2), & \gamma = \gamma_1
      \land \gamma_2, \\

      \mathsf{Pr}_{> p} E_{PTTL}(\psi), & \gamma =
      \mathsf{Pr}_{> p} \psi, \\

      \mathsf{Pr}_{\geq p} E_{PTTL}(\psi), & \gamma =
      \mathsf{Pr}_{\geq p} \psi, \\

      [-] E_{PTTL}(\phi), & \gamma = \mathsf{AX} \phi, \\

      \diam{-} E_{PTTL}(\phi), & \gamma = \mathsf{EX} \phi, \\

      \mu X.E_{PTTL}(\phi_2) \lor (E_{PTTL}(\phi_1) \land [-] X), &
      \gamma = \mathsf{A}[\phi_1 \mathsf{U} \phi_2], \\

      \mu X.E_{PTTL}(\phi_2) \lor (E_{PTTL}(\phi_1) \land \diam{-} X), &
      \gamma = \mathsf{E}[\phi_1 \mathsf{U} \phi_2], \\

      \nu X.E_{PTTL}(\phi_2) \land (E_{PTTL}(\phi_1) \lor [-] X), &
      \gamma = \mathsf{A}[\phi_1 \kw{R} \phi_2]. \\

      \nu X.E_{PTTL}(\phi_2) \land (E_{PTTL}(\phi_1) \lor \diam{-} X), &
      \gamma = \mathsf{E}[\phi_1 \kw{R} \phi_2]. \\

      \hline
    \end{array}
    \right.
  \end{equation*}
\end{table}
This translation also concretely demonstrates how the branching-time
nature of GPL and XPL has not been recognized: either unnoticed (``the
existing model-checking algorithms do not work for branching
processes''~\cite{bpmc}) or misunderstood (``it cannot express $\dots$
the CTL formula $\mathsf{EG} p$''~\cite{castro} --- but, through PTTL,
it can).

\section{Discussion and Future Work}\label{sec:concl}
Previous attempts to extend GPL included allowing systems with
internal nondeterminism while still resolving the probabilistic
choices first~\cite{bran}, and EGPL, which had similar syntax and
semantics to XPL, but limited the model checking to non-recursive
formulae~\cite{soni}.

Following GPL, XPL treats conjunction in a traditional manner,
retaining the properties that $\psi \land \neg \psi = \kw{ff}$, and
$\psi \land \psi = \psi$ for any formula $\psi$.  However, the
probability value of $\psi_1\land \psi_2$ cannot be computed based on
the probability values of the conjuncts $\psi_1$ and $\psi_2$.  This
makes model checking in XPL more complex, but also contributes to its
expressiveness.

Another probabilistic extension of $\mu$-calculus is pL$\mu$.  In
contrast to XPL, the most expressive version of pL$\mu$, denoted
pL$\mu^{\odot}_{\oplus}$~\cite{indp, mio}, defines \emph{three}
conjunction operators and their duals such that their probability
values can be computed from the probabilities of the conjuncts.  The
logic pL$\mu^{\odot}$ is able to support branching time and an
intuitive game semantics~\cite{indp}.  Along the same lines as our XPL
encoding, we can encode termination of $1$-exit RMDPs as model
checking in pL$\mu^{\odot}$, and RMC termination in
pL$\mu^{\odot}_{\oplus}$.  However, attempting to encode multi-exit
RMDP termination in pL$\mu^{\odot}_{\oplus}$ similarly to multi-exit
RMC termination would lead to an incorrect, rather than undecidable,
encoding.  Determining the relationship between XPL and pL$\mu^\odot$
in branching time is an important problem.  Other recent probabilistic
extensions of $\mu$-calculus include the Lukasiewicz
$\mu$-calculus~\cite{lukas} and $\mu^p$-calculus~\cite{castro}, which
can encode PCTL* over MDPs, and P$\mu$TL~\cite{pmutl}, but all these
limit nondeterminism to the linear-time semantics.  Quantitative
$\mu$-calculi, such as qM$\mu$~\cite{McIver} and Q$\mu$~\cite{qmmu},
are more similar to pL$\mu$, so we do not offer an independent
comparison to XPL.

Although closely related, algorithms to check properties of RMCs (and
pPDSs~\cite{EKM:LICS04}) were developed independently~\cite{RMC}.
These were related to algorithms for computing properties of systems
such as branching process (BP) extinction and the language probability
of Stochastic Context Free Grammars.  The relationship between GPL and
these systems was mentioned briefly in~\cite{pip}, but has remained
largely unexplored.

There has been significant interest in the study of expressive systems
with nondeterministic choices, such as RMDPs and Branching MDP
(BMDPs)~\cite{rmdp}.  At the same time, the understanding of the
polynomial systems has expanded.  In~\cite{pps}, the class of
Probabilistic Polynomial Systems (PPS) is introduced, which
characterizes when efficient solutions to polynomial equation systems
are possible even in the worst case~\cite{newton}.  While~\cite{RMC}
did not distinguish the systems arising from $1$-exit RMCs from those
from multi-exit RMCs, the PPS class is limited to $1$-exit RMCs.  It
was also extended for RMDP termination and, later, BMDP reachability,
both having polynomial-time complexity for min/maxPPSs~\cite{pps,
  bmdp}.

Systems producing equations in PPS form show an interesting
characteristic: that the properties are expressible as purely
conjunctive or purely disjunctive formulae.  Recall that such formulae
are trivially separable.  Polynomial systems equivalent to those
arising from separable GPL have recently been considered in a more
general setting in the context of \emph{game automata}~\cite{tree15},
followed by an undecidability result for more general properties on
the automata~\cite{undec16}.  Characterizing equation systems that
arise from separable formulae and investigating their efficient
solution is an interesting open problem.  Finally, this paper
addressed the decidability of model checking; determining the
complexity of model checking is a topic of future research.


\end{document}